\theoremstyle{plain}
\newtheorem{Th}{Theorem}[section]
\newtheorem{Cor}[Th]{Corollary}
\newtheorem{Prop}[Th]{Proposition}
\theoremstyle{definition}
\newtheorem{Def}{Definition}[section]
\theoremstyle{remark}
\newtheorem*{Rem}{Remark}%[section]
\numberwithin{equation}{section}
\newcommand{\ZZ}{{\mathbb Z}}
\newcommand{\VV}{{\mathbb V}}
\newcommand{\bphi}{\boldsymbol{\phi}}
\newcommand{\bpsi}{\boldsymbol{\psi}}
\begin{document}

\title
{Non-autonomous multidimensional Toda system and multiple interpolation problem}

\author{Adam Doliwa}

\address{A. Doliwa, Faculty of Mathematics and Computer Science\\
	University of Warmia and Mazury in Olsztyn\\
	ul.~S{\l}oneczna~54\\ 10-710~Olsztyn\\ Poland} 
\email{doliwa@matman.uwm.edu.pl}

\date{}
\keywords{multiple interpolation, rational interpolation, Hermite--Pad\'{e} approximation, numerical analysis, discrete integrable systems, non-autonomous discrete-time Toda equations, Wynn recurrence}
\subjclass[2010]{37N30, 37K20, 41A21, 37K60, 65Q30, 42C05}

\begin{abstract}
We study the interpolation analogue of the Hermite--Pad\'e type~I approximation problem. We provide its determinant solution and we write down the corresponding integrable discrete system as an admissible reduction of Hirota's discrete Kadomtsev--Petviashvili equations. Apart from the $\tau$-function form of the system we provide its variant, which in the simplest case of dimension two reduces to the non-autonomous discrete-time Toda equations. 
\end{abstract}
\maketitle

\section{Introduction}
The methods of the theory of integrable systems, originally discovered in the context of nonlinear waves~\cite{GGKM}, proliferated on many parts of contemporary mathematics. At the same time as theory embraced new domains, its links with the classical fields of mathematics were being revealed. 
The present work discusses the connection of the integrability with interpolation theory, considered usually as a part of the applied numerical analysis. Our main motivation was however the well known relation between the Pad\'{e} approximants, the Toda lattice equations, theory of orthogonal polynomials, Painlev\'{e} equations and random matrices~\cite{Brezinski-PTA-OP,BultheelvanBarel,Hirota-2dT,PGR-LMP,Adler-vanMoerbeke-1995,IDS}. 

\subsection{The Hermite--Pad\'{e} approximation and integrability}
Approximation by rational functions that give the best possible matching with the given expansion of the function into the Taylor series (the so-called Pad\'{e} approximation) is widely used in numerical calculations. It works especially well for functions that have singularities. The known determinant formulas, originating from Jacobi, are not numerically effective. In practice, recursive relationships (the Frobenius identities) between numerators and denominators of the approximants organized into two-dimensional arrays are used~\cite{BakerGraves-Morris}. These recursions were rediscovered after 100 years in the broader context of integrable systems theory~\cite{Hirota-2dT,IDS} and their usefulness in approximation problems can be explained by the existence of an (infinite) number of symmetries that stabilize the calculations. 

Due to the structure of determinants involved, the Pad\'{e} approximants are closely related to the theory of orthogonal polynomials, the importance of which in mathematics and theoretical physics cannot be overestimated~\cite{Szego,Geronimus,Ismail,VilenkinKlimyk}. Multiple orthogonal polynomials generalize the notion of  polynomials orthogonal with respect to one measure to polynomials satisfying the orthogonality conditions with respect to a system of several measures~\cite{Aptekarev,NikishinSorokin,VanAsche}. They have found applications in random matrix theory,  stochastic processes and in combinatorics~\cite{BleherKuijlaars,AptekarevKuijlaars,AdlervanMoerbeke,Kuijlaars}. Their theory can be traced back however to works of Hermite (it is worth noting that Pad\'{e} was his PhD student) on transcendence of Euler's number~$e$~\cite{Hermite,Hermite-P}. The corresponding generalization (for given several Taylor series) of the approximation, and thus of the multiple orthogonality, splits into two dual~\cite{Mahler-P} problems: the Hermite--Pad\'{e} type I and type II approximation.  

The question of the recurrence relations generalizing those of the Pad\'{e} approximation/orthogonal polynomials theory was raised within the numerical algorithms community~\cite{Paszkowski,DD-DC-2} and in works on multiple orthogonal polynomials in~\cite{AptekarevDerevyaginVanAssche,AptekarevDerevyaginMikiVanAssche}, where the time variable (continuous or discrete) shows up from an appropriate variation of the measure. 
In particular, in \cite{AdlervanMoerbekeVanhaecke} it was shown that the determinants of the moment matrices satisfy, upon adding one set of ``time'' deformations for each weight, the multi-component Kadomtsev--Petviashvili (KP) hierarchy~\cite{DKJM,Kac-vLeur}. The duality relations between type I and type II multiple orthogonal polynomials have been formulated there as a bilinear identity satisfied by the Riemann--Hilbert matrix and its adjoint. Generalization of such an approach to the multi-component 2D Toda hierarchy \cite{UenoTakasaki} within the context of the Gauss--Borel factorization was given in~\cite{Alvarez-FernandezPrietoManas}. 
Among other related 
topics discussed there one can find the construction of discrete flows in terms of Miwa transformations~\cite{Miwa}.  

As it was clarified in~\cite{Doliwa-Siemaszko-HP} the difference equations of the Hermite--Pad\'{e} approximation theory are equivalent (under simple linear change of independent variables) to the $\tau$-function form of the integrable multidimensional generalization of the discrete-time Toda lattice given in~\cite{AptekarevDerevyaginMikiVanAssche} and of its linear problem. It was shown there also that they can be obtained as an admissible reduction of Hirota's discrete KP system~\cite{Hirota,Miwa}, which plays a special role within the theory of integrable equations and their applications, see reviews~\cite{KNS-rev,Zabrodin}. It should be mentioned that ``multidimensionality'' refers here to the number of independent (discrete) variables which can be arbitrarily large, but the initial boundary value problem is typical for two-dimensional systems. Other integrable systems related to the Hermite--Pad\'{e} approximation have been derived and studied in~\cite{ManoTsuda,NagaoYamada}.

%As it was shown by Miwa~\cite{Miwa}, the Hirota system encodes %the full KP hierarchy~\cite{DKJM} of integrable partial %differential equations. It is well known, see for example %reviews~\cite{KNS-rev,Zabrodin}, that majority of the known %integrable systems can be obtained as its reductions. %Moreover, the most important techniques used to find solutions %of integrable equations can be applied to the Hirota system in %their pure forms (the finite gap algebro-geometric %method~\cite{BBEIM} in~\cite{Krichever,Shiota}, the  non-local %$\bar{\partial}$-dressing %method~\cite{AblowitzBarYaacovFokas,Konopelchenko-book} %in~\cite{Dol-Des}, the Darboux transform~\cite{Matveev} %in~\cite{Nimmo-KP}). It has simple geometric meaning and its %multidimensional consistency is encoded in the Desargues %configuration~\cite{Dol-Des}. The symmetry structure of the %system and of its non-commutative %generalization~\cite{Nimmo-NCKP} is described by the affine %Weyl group of $A_\infty$ type~\cite{Dol-AN}.

\subsection{The rational interpolation problem and formulation of the main results}
The analogous problem of rational interpolation (or the Cauchy interpolation~\cite{Cauchy,CuytWuytack})  is that of finding rational function with prescribed degrees of the nominator and the denominator, which assumes prescribed values at given distinct points (nodes of the interpolation).
Its solution in terms of certain determinants was given by Jacobi~\cite{Jacobi}; see also \cite{GathenGerhard} for discussion of computational complexity of the Cauchy interpolation and its relation to the Extended Euclidean Algorithm and the Chinese Remainder Theorem. 
If instead of prescribing the values at distinct points we take the confluent limit when the points coincide, and consecutively we prescribe an initial segment of the Taylor expansion of an analytic function we obtain the Pad\'{e} approximation problem (various intermediate cases are possible as well~\cite{CuytWuytack}).

The analogous recurrence relations behind the rational interpolation are not so well studied, see however the numerical analysis literature \cite{BultheelvanBarel,CuytWuytack,BulirschStoer} for the non-autonomous versions of the Frobenius identities where the interpolation nodes appear explicitly. One can find there also the non-autonomous version of the Wynn recurrence \cite{Claessens,Wynn} (the missing identity of Frobenius) whose integrability/multidimensional consistency was recently studied in \cite{Kels} within broader context of cross-shaped difference equations. From the other side, a non-autonomous version of the discrete time Toda lattice equation~\cite{SpiridonovZhedanov,Hirota-naTl,SpiridonovZhedanov-2,MukaihiraTsujimoto,KajiwaraMukaihira,MaedaTsujimoto} was studied recently from the point of view of the discrete orthogonal polynomials, integrability and linear algebra algorithms. One of results of the present work is demonstration of the connection between these problems. 

However the main results of the paper are about the relation between integrability and the interpolation generalization of the Hermite--Pad\'{e} approximation problem of type~I, which we call the multiple interpolation problem.

It should be mentioned that a part of difference equations considered here was investigated by Mahler~\cite{Mahler-P} and used in~\cite{NagaoYamada} in relation to the discrete Painlev\'{e} equations. In particular
\begin{itemize}
	\item we present a solution of the multiple interpolation problem in terms of certain determinants built from of the interpolation data;
	\item we derive non-autonomous relations between the determinants generalizing the corresponding equations~\cite{Paszkowski} known from the Hermite--Pad\'{e} approximation theory;
	\item we show integrability of the resulting equations which form a non-autonomous generalization of the multidimensional Toda lattice equations;
	\item we demonstrate that the equations can be obtained as an admissible reduction of the Hirota system;
	\item we give another form of the equations which we expect to be more relevant in the theory of generalized orthogonal polynomials;
	\item finally, we show how in dimension two, which contains as a special case the rational interpolation of a single function, the above results reproduce both the non-autonomous discrete-time Toda system and the non-autonomous Wynn recurrence.
\end{itemize}
Because the generalization from multiple approximation to the multiple interpolation preserves the integrability structure of the underlying equations we are convinced that the above results will be relevant in all the fields where  the multiple approximation or the closely related multiple orthogonality techniques have found already their application.

The structure of the paper is as follows. In the rest of this  Section we recall basic ingredients of the Hirota system and its multidimensional Toda lattice reduction. Then in Section~\ref{sec:mCi} we present the interpolation generalization of the Hermite--Pad\'{e} type I problem and we give its determinant solution. In the next Section~\ref{sec:DE}, basing on certain determinant identities, we show that the polynomial solutions of the interpolation problem satisfy difference equations with discrete variables being the degrees of the polynomials. Section~\ref{sec:namTs} is devoted to discussion of integrability of the equations. Finally, in Section~\ref{sec:2D} we show how in the simplest case we recover the known results on the non-autonomous discrete-time Toda lattice and the corresponding generalization of the Wynn recurrence.  We close the paper by presenting conclusions and related open problems.

%\section{Preliminaries} \label{sec:preliminaries}

\subsection{The Hirota equation and multidimensional Toda system} \label{sec:Hirota}
Let $\ZZ^m$ be $m\geq 3$-dimensional integer lattice with $n=(n_1, \dots , n_m) = \sum_{j=1}^m n_j e_j$ being the discrete variable, and $(e_j)_{j=1}^m$ being the standard basis. Hirota's discrete KP equation \cite{Hirota,Miwa} reads as follows
\begin{equation} \label{eq:H-M}
\tau(n+e_i)\tau(n+e_j+e_k) - \tau(n+e_j)\tau(n+e_i+e_k) + \tau(n+e_k)\tau(n+e_i+e_j) =0, 
\end{equation}
where $\tau\colon\ZZ^m\to \Bbbk$ is an unknown function with values in a field $\Bbbk$ (usually the real or complex numbers, but see~\cite{BialeckiDoliwa} for finite field solutions or \cite{FWN-Capel,Nimmo-NCKP} for a non-commutative version of the system); here also $1\leq i< j <k \leq m$. In order to apply techniques of integrable systems theory~\cite{IDS} it is important to represent the non-linear system~\eqref{eq:H-M} as compatibility condition of the corresponding linear problem which we consider the following form~\cite{DJM-II,Saito-Saitoh}
\begin{equation}
	\label{eq:ad-lin-bil}
	\bphi(n+e_j) \tau(n+e_i)  -  \bphi(n+e_i)\tau (n+e_j) =  \bphi(n+e_i+e_j) \tau(n), \qquad i<j,
	\end{equation}
where $\bphi\colon\ZZ^m\to\VV$, called the wave function, takes values in a linear space over $\Bbbk$.

Adding to the linear problem \eqref{eq:ad-lin-bil} the constraint
\begin{equation} \label{eq:P-lin}
x\bphi (n,x) \tau(n) = \bphi(n+e_1,x) \tau(n-e_1) + \dots + \bphi(n+e_m,x) \tau(n-e_m),
\end{equation}
results~\cite{Doliwa-Siemaszko-HP}, by compatibility, in supplementing the Hirota system~\eqref{eq:H-M} by the equation
\begin{equation}
\label{eq:tau-Paszkowski}
\tau(n)^2 = \tau(n+e_1) \tau(n-e_1) + \dots + \tau(n+e_m) \tau(n-e_m).  
\end{equation}
Equations \eqref{eq:H-M} and \eqref{eq:tau-Paszkowski} together are known as the multidimensional Toda system, whose special solutions are relevant in the theory of the Hermite--Pad\'{e} approximation problem~\cite{Doliwa-Siemaszko-HP} or in the theory of multiple orthogonal polynomials~\cite{AptekarevDerevyaginMikiVanAssche}.

The linear system~\eqref{eq:ad-lin-bil}-\eqref{eq:P-lin} is meaningful also in the special case $m=2$
\begin{align}
	\label{eq:ad-lin-bil-m2}
  \bphi(n+e_1+e_2,x) \tau(n) & =	\bphi(n+e_2,x) \tau(n+e_1)  -  \bphi(n+e_1,x)\tau (n+e_2) , \\
\label{eq:P-lin-m2}
x\bphi (n,x) \tau(n)  & = \bphi(n+e_1,x) \tau(n-e_1) +  \bphi(n+e_2,x) \tau(n-e_2),	
	\end{align}
and results in the single equation
	\begin{equation}
\label{eq:Toda}
\tau(n)^2 = \tau(n+e_1) \tau(n-e_1) + \tau(n+e_2) \tau(n-e_2),
	\end{equation}
known in the theory of integrable systems as the discrete-time Toda lattice equation~\cite{Hirota-2dT}. The system~\eqref{eq:ad-lin-bil-m2}-\eqref{eq:Toda} forms a part of the so called Frobenius identities in the Pad\'{e} approximation problem~\cite{BakerGraves-Morris}, and special determinant solutions of the system provide solution to the problem.

\begin{Rem}
The auxiliary variable $x$ is known in the soliton theory as the spectral parameter. It does not show up in the linear problem of the Hirota system before imposing the reduction constraint~\eqref{eq:P-lin}, but even there it plays an important role in construction of its solutions  by the algebro-geometric techniques~\cite{Krichever} or by the non-local $\bar\partial$-dressing method~\cite{Dol-Des}. 
\end{Rem}

\section{Multiple interpolation problem and its determinant solution} \label{sec:mCi}
Given $\Bbbk$-valued functions $(f_1,\dots , f_m)$ of single variable $x$ and given a sequence $(x_s)_{s=1,2,\dots}$. Consider $n=(n_1,\dots , n_m) = \sum_{k=1}^m n_k e_k$, an element of $\ZZ_{\geq -1}^m$, where we also write $|n|= n_1 + \dots + n_m$. 
\begin{Def}
By a~\emph{multiple interpolation form of degree $n$} we call any system of polynomials $(Y_1, \dots , Y_m)$ in $\Bbbk[x]$, not all equal to zero, with corresponding degrees $\deg Y_i \leq n_i$, $i=1,\dots , m$ (degree of the zero polynomial by definition equals $-1$), and such that 
\begin{equation} \label{eq:mCi}
Y_1(x_s)f_1(x_s) + \dots + Y_m(x_s) f_m(x_s) = 0 \qquad s = 1, 2 , \dots , |n|+m-1.
\end{equation}
\end{Def}
\begin{Rem}
	In the confluent case when all the points $x_s$ coincide, and with transition to the appropriate tangency condition, the above multiple interpolation problem becomes the Hermite--Pad\'{e} approximation problem of type I \cite{Mahler-P,Aptekarev}. 
\end{Rem}

Define matrix $\mathcal{M}(n)$ of $N=(|n|+m-1)$ rows and $(|n|+m)$ columns 
\begin{small}
\begin{equation}
\mathcal{M}(n) = \left( \begin{matrix}
f_1(x_1) & x_1f_1(x_1) &  \cdots   & x_1^{n_1}f_1(x_1) & \cdots & \cdots & f_m(x_1) & x_1 f_m(x_1) & \cdots  & x_1^{n_m} f_m(x_1) \\
f_1(x_2) & x_2 f_1(x_2) & \cdots    & x_2^{n_1}f_1(x_2)   & \cdots & \cdots & f_m(x_2) & x_2 f_m(x_2) & \cdots  & x_2^{n_m} f_m(x_2) \\
\vdots & \vdots   & \ddots &  \vdots & \cdots  & \cdots & \vdots &  \vdots  & \ddots  &  \vdots  \\
f_1(x_{N}) & x_N f_1(x_{N})  & \cdots & x_N^{n_1}f_1(x_N) & \cdots & \cdots & f_m(x_N)& x_N f_m(x_N) & \cdots & x_N^{n_m} f_m(x_N)
\end{matrix} \right) ;
\end{equation}
\end{small}
its columns are divided into $m$ (possibly empty) groups, the $i$th group is composed out of $n_i +1$ columns depending on the values $f_i(x_s)$ only, $s=1, \dots , N$.
Let us supplement $\mathcal{M}(n)$ at the bottom by the line
\begin{equation*}
b(n,x) = \left( f_1(x), xf_1(x), \dots , x^{n_1} f_1(x), \cdots \, \cdots , f_m(x), x f_m(x), \dots , x^{n_m} f_m(x) \right)
\end{equation*}
and denote by $\mathcal{D}(n,x)$ the determinant of the resulting square matrix. Similarly, supplement the matrix $\mathcal{M}(n)$ at the bottom by the line 
\begin{equation}
\label{eq:Xk}
X_k(n,x) = (0, \dots, 0, \dots \; \dots , 1, x , \dots , x^{n_k}, \dots \; \dots , 0, \dots ,0), \quad  k=1,\dots m , 
\end{equation}
consisting of zeros except for the $k$th block of the form $1, x, \dots , x^{n_k}$. Its determinant $Z_k(n,x)$, given explicitly as 
\begin{equation} \label{eq:Z-det}
Z_k(n,x) = \left|
\begin{smallmatrix}
f_1(x_1) &  x_1f(x_1) & \cdots   & x_1^{n_1}f_1(x_1) & \cdots & \cdots & f_k(x_1) & x_1 f_k(x_1) & \cdots & x_1^{n_k}f_k(x_1) &\cdots & \cdots & f_m(x_1) &  \cdots  & x_1^{n_m} f_m(x_1) \\
f_1(x_2) & x_2 f(x_2) &  \cdots   & x_2^{n_1}f_1(x_2) & \cdots & \cdots & f_k(x_2) & x_2 f_k(x_2) & \cdots  & x_2^{n_k}f_k(x_2) &\cdots & \cdots & f_m(x_2) & \cdots  & x_2^{n_m} f_m(x_2) \\
\vdots & \vdots &  \ddots  &  \vdots & \cdots &\cdots & \vdots& \vdots & \ddots & \vdots &&   & \vdots &  & \vdots\\
f_1(x_N) & x_N f(x_N) & \cdots & x_N^{n_1} f_1(x_N) & & &f_k(x_N)& x_N f_k(x_N) &\cdots & x_N^{n_k} f_k(x_N)& & &  f_m(x_N) & \cdots & x_N^{n_m} f_m(x_N) \\
0 & 0 & \cdots & 0 & \cdots & \cdots &1 & x & \cdots & x^{n_k} & \cdots &\cdots & 0 & \cdots & 0
\end{smallmatrix} 
\right|,
\end{equation}
is a polynomial of degree not exceeding $n_k$.
\begin{Prop}
	The polynomials $(Z_1(n,x),\dots , Z_m(n,x))$ provide solution of the above multiple interpolation problem~\eqref{eq:mCi}.
\end{Prop}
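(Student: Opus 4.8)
The plan is to deduce the interpolation property of $(Z_1(n,x),\dots,Z_m(n,x))$ from a single determinant identity relating these polynomials to $\mathcal{D}(n,x)$, obtained by multilinearity of the determinant in its last row; the degree bounds $\deg Z_k\le n_k$ are read off directly from the shape of the bordering row $X_k(n,x)$.

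First I would record the degree bound. Expanding the determinant $Z_k(n,x)$ in \eqref{eq:Z-det} along its last row $X_k(n,x)$, whose only nonzero entries are $1,x,\dots,x^{n_k}$ sitting in the $k$th block, expresses $Z_k(n,x)$ as a $\Bbbk$-linear combination of $1,x,\dots,x^{n_k}$ with coefficients equal to the corresponding signed $N\times N$ minors of $\mathcal{M}(n)$; these minors do not involve $x$, hence $\deg Z_k(n,x)\le n_k$ and $(Z_1,\dots,Z_m)$ has the admissible degree profile.

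Next comes the key identity. Viewing the bordering rows as vectors of length $|n|+m$, one has $b(n,x)=\sum_{k=1}^m f_k(x)\,X_k(n,x)$, because the $k$th block of $b(n,x)$ equals $f_k(x)\cdot(1,x,\dots,x^{n_k})$ while $X_k(n,x)$ vanishes outside its $k$th block. Substituting this decomposition into the last row of the square matrix whose determinant is $\mathcal{D}(n,x)$ and using linearity of the determinant in that row gives $\mathcal{D}(n,x)=\sum_{k=1}^m f_k(x)\,Z_k(n,x)$. Now I would evaluate at a node $x=x_s$ with $1\le s\le N=|n|+m-1$: the bottom row $b(n,x_s)$ then coincides with the $s$th row of $\mathcal{M}(n)$, so the enlarged matrix has two equal rows and $\mathcal{D}(n,x_s)=0$. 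Combining the two displays yields $\sum_{k=1}^m Z_k(n,x_s)f_k(x_s)=0$ for $s=1,\dots,|n|+m-1$, which is exactly \eqref{eq:mCi}.

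The only point requiring care is that $(Z_1,\dots,Z_m)$ be \emph{not} identically zero, as demanded in the definition of a multiple interpolation form. This is equivalent to $\mathcal{M}(n)$ having maximal rank $N$: in that case some $N\times N$ minor is nonzero and, by the expansion used for the degree bound, that minor is, up to sign, a coefficient of some $Z_k$; if instead $\operatorname{rank}\mathcal{M}(n)<N$, all maximal minors vanish, so every $Z_k\equiv 0$ and one first has to discard superfluous interpolation conditions. I expect this rank bookkeeping to be the main (and essentially the only) obstacle, the rest being the two-line multilinearity computation above.
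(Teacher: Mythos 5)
Your argument is correct and is essentially the paper's own proof: the identity $\mathcal{D}(n,x)=\sum_{k}f_k(x)Z_k(n,x)$ obtained by linearity of the determinant in the last row is exactly the Laplace expansion along that row used in the paper, and the vanishing $\mathcal{D}(n,x_s)=0$ via a repeated row is the same second step. Your closing remark on the degree bounds and on the possibility that all $Z_k$ vanish when $\operatorname{rank}\mathcal{M}(n)<N$ is a sensible extra precaution that the paper only addresses implicitly later, through the notion of a perfect system.
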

\begin{proof}
	The Laplace expansion of $\mathcal{D}(n,x)$ with respect to the last row gives
	\begin{equation} \label{eq:Z-f-Delta}
	\mathcal{D}(n,x) = Z_1(n,x) f_1(x) + \dots + Z_m(n,x) f_m(x) .
	\end{equation}
From the other side, by elementary properties of the determinants we have $\mathcal{D}(n,x_s) = 0$, for all $s=1, \dots , N$.
\end{proof}
\begin{Rem}
The above polynomials will be called \emph{the canonical multiple interpolation form of degree $n$}.
\end{Rem}
By $\Delta(n) = \mathcal{D}(n,x_{N+1})$ denote the determinant 
\begin{equation} \label{eq:Delta-det}
\Delta(n) = \left|
\begin{smallmatrix}
f_1(x_1) &  x_1f(x_1) & \cdots   & x_1^{n_1}f_1(x_1) & \cdots & \cdots  & f_m(x_1) & x_1 f_m(x_1) & \cdots  & x_1^{n_m} f_m(x_1) \\
f_1(x_2) & x_2 f(x_2) &  \cdots   & x_2^{n_1}f_1(x_2) & \cdots & \cdots  & f_m(x_2) & x_2 f_m(x_2) & \cdots  & x_2^{n_m} f_m(x_2) \\
\vdots & \vdots &  \ddots  &  \vdots & \cdots & \cdots & \vdots & \vdots & \ddots  & \vdots\\
f_1(x_N) & x_N f(x_N) & \cdots & x_N^{n_1} f_1(x_N)  &  &  & f_m(x_N) & x_N f_m(x_N) & \cdots & x_N^{n_m} f_m(x_N) \\
f_1(x_{N+1}) & x_{N+1} f(x_{N+1}) & \cdots & x_{N+1}^{n_1} f_1(x_{N+1}) & \cdots & \cdots &  f_m(x_{N+1}) & x_{N+1} f_m (x_{N+1}) &\cdots & x_{N+1}^{n_m} f_m(x_{N+1}) 
\end{smallmatrix} 
\right|
\end{equation}
of the matrix $\mathcal{M}(n)$ supplemented by the line
\begin{equation} \label{eq:bN}
b(n,x_{N+1}) = \left( f_1(x_{N+1}),  \dots , x_{N+1}^{n_1} f_1(x_{N+1}), \cdots \; \cdots , f_m(x_{N+1}), \dots , x_{N+1}^{n_m} f_m(x_{N+1}) \right),
\end{equation} 
as the last row. 
\begin{Cor}
By the Laplace expansion of the determinant \eqref{eq:Z-det} we obtain the leading term of the polynomial $Z_k(n,x)$ which reads
\begin{equation} \label{eq:leading-term-Z}
Z_k(n,x) = (-1)^{(n_{k+1} + \dots + n_m) + m - k}\Delta(n-e_k) x^{n_k} + \text{lower order terms}.
\end{equation}
\end{Cor}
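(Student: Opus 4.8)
The key observation is that the matrix $\mathcal{M}(n)$ carries no $x$, so in the determinant \eqref{eq:Z-det} the variable $x$ enters only through the last row $X_k(n,x)$. The plan is therefore to perform the Laplace expansion of $Z_k(n,x)$ along that last row. Since $X_k(n,x)$ is supported on the $k$th block of columns, only the entries $1,x,\dots,x^{n_k}$ there contribute, each multiplied by a cofactor that does \emph{not} depend on $x$; hence $Z_k(n,x)=\sum_{\ell=0}^{n_k}x^{\ell}C_\ell$, and the coefficient of the top power $x^{n_k}$ is precisely the cofactor $C_{n_k}$ attached to the last column of the $k$th block.

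The next step is to identify that cofactor. Striking out the last row and the last column of the $k$th block from the $(N+1)\times(N+1)$ matrix underlying \eqref{eq:Z-det} leaves the rows indexed by $x_1,\dots,x_N$ and a column pattern in which the $k$th block has been shortened from $n_k+1$ to $n_k$ columns, i.e.\ exactly the column layout belonging to the degree vector $n-e_k$: the first $N-1$ of those rows constitute $\mathcal{M}(n-e_k)$, while the last one is the row $b(n-e_k,x_N)$. Since for the degree vector $n-e_k$ the number of interpolation rows is $N-1$, so that the extra bottom row in \eqref{eq:Delta-det} is evaluated at $x_N$, this minor equals $\Delta(n-e_k)$.

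Finally I would pin down the sign. The struck column occupies position $p+n_k+1$ with $p=\sum_{i=1}^{k-1}(n_i+1)$, and the struck row has index $N+1$ with $N=|n|+m-1$, so the cofactor sign is $(-1)^{(N+1)+(p+n_k+1)}$; reducing the exponent modulo $2$ collapses the doubled terms $2\sum_{i<k}n_i+2n_k$ and turns $m+k$ into $m-k$, leaving the exponent $(n_{k+1}+\dots+n_m)+m-k$ of \eqref{eq:leading-term-Z}. The whole argument is a direct unpacking of the definitions; the only place demanding a little care is this parity bookkeeping, and, separately, the boundary case $n_k=0$, where "leading term" should be read as "constant term" and $n-e_k$ has a $(-1)$-entry, the determinant formula \eqref{eq:Delta-det} still making literal sense.
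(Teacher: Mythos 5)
Your proposal is correct and follows exactly the route the paper intends: the Corollary's only stated justification is ``by the Laplace expansion of the determinant \eqref{eq:Z-det}'', and you carry out precisely that cofactor expansion along the last row, correctly identifying the minor as $\Delta(n-e_k)$ (including the shift $N\mapsto N-1$ in the row count so that the bottom row is $b(n-e_k,x_N)$) and verifying the sign $(-1)^{(N+1)+(p+n_k+1)}\equiv(-1)^{(n_{k+1}+\dots+n_m)+m-k}$. The parity bookkeeping and the remark on the degenerate case $n_k=0$ are both sound; nothing is missing.
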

Therefore if for all $n$ the determinants $\Delta(n)$ do not vanish then the polynomials $Z_k$ are of the maximal order. Such a system of functions $(f_1, \dots , f_m)$ and the sequence $(x_s)_{s=1,2,\dots}$ will be called \emph{perfect} in analogy with the standard terminology~\cite{Mahler-P}. Then for each $n$ the space of multiple interpolation forms is one-dimensional. 

\section{Difference equations behind the multiple interpolation problem} \label{sec:DE}
\subsection{Multiple interpolants and the Hirota system}
\label{sec:CH-equations}
In this Section we study integrable equations satisfied by the polynomials $Z_\ell(n,x)$ and the function $\Delta(n)$. 
\begin{Prop}
	The polynomials $Z_\ell(n,x)$, $\ell = 1, \dots , m$, satisfy the following system of linear equations
\begin{equation} \label{eq:Z-Delta}
Z_{\ell}(n+e_i + e_j,x) \Delta(n) = Z_{\ell}(n+e_j,x) \Delta(n+e_i) - 
Z_{\ell}(n+e_i,x) \Delta(n+e_j), \qquad 1\leq i<j \leq m.
\end{equation}
\end{Prop}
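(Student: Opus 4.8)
The plan is to derive \eqref{eq:Z-Delta} as a consequence of the Hirota-type linear problem \eqref{eq:ad-lin-bil}, once we identify $\Delta(n)$ with a $\tau$-function and $Z_\ell(n,x)$ (suitably normalized, and for each fixed $x$) with components of the wave function $\bphi(n)$. More directly, though, I would prove \eqref{eq:Z-Delta} by a purely determinantal argument, since both $Z_\ell(n,x)$ and $\Delta(n)$ are explicit minors of one large matrix. First I would observe that all the objects entering \eqref{eq:Z-Delta} — namely $Z_\ell$ and $\Delta$ at the eight shifted arguments $n$, $n+e_i$, $n+e_j$, $n+e_i+e_j$ — are maximal minors of a single $(|n|+m+1)\times(|n|+m+1)$-type configuration built from the rows indexed by $x_1,\dots,x_{|n|+m+1}$ together with the extra row $X_\ell(n+e_i+e_j,x)$. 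The identity should then fall out of a Plücker-type (Sylvester/Desnanot--Jacobi) relation among these minors.

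Concretely, I would set up the following. Consider the matrix with rows $f$-data at nodes $x_1,\dots,x_{|n|+m+1}$ (that is $N+2$ nodes if $N=|n|+m-1$) in all the columns needed for degree $n+e_i+e_j$, i.e. the column blocks have widths $n_1+1,\dots,n_i+2,\dots,n_j+2,\dots,n_m+1$. Deleting the last column of the $i$th block and the last column of the $j$th block, or deleting various rows, produces the matrices whose determinants are (up to sign) $\Delta(\cdot)$ at the relevant arguments; appending instead the row $X_\ell$ with the appropriate block of powers of $x$ produces the $Z_\ell(\cdot,x)$. The key step is to recognize \eqref{eq:Z-Delta} as the three-term Plücker relation associated with choosing a reference minor and two ``exchange'' columns: the column $x^{n_i+1}f_i$ (the top of block $i$) and the column $x^{n_j+1}f_j$ (the top of block $j$), with the row $X_\ell$ playing the role of the distinguished extra row. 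Expanding the corresponding $3\times 3$ Plücker relation and matching signs with the leading-term signs recorded in \eqref{eq:leading-term-Z} should reproduce \eqref{eq:Z-Delta} exactly.

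I expect the main obstacle to be \emph{bookkeeping of signs and column positions}, not the existence of the identity. Because the $\ell$th block where $X_\ell$ lives may coincide with the $i$th or $j$th block, or be disjoint from both, one must check that the same Plücker relation specializes correctly in each case; the powers of $(-1)$ in \eqref{eq:leading-term-Z} are precisely the combinatorial data that make this uniform. A clean way to manage this is to first prove the statement for the canonical forms at a single extra node, i.e. prove the analogous identity \eqref{eq:Z-Delta} with $\Delta(\cdot)$ replaced by $\calD(\cdot,x')$ for a generic $x'$ — both sides are then polynomials in $x$ (and in $x'$) arising as maximal minors of one matrix, and the Sylvester identity applies verbatim — and only afterwards specialize $x'=x_{N+1}$ to recover $\Delta$. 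This avoids any special treatment of whether a column is an ``$f$-row'' column or the ``$X_\ell$'' column.

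Alternatively, and perhaps more in the spirit of the integrable-systems framing of the paper, one can avoid the combinatorics entirely: it suffices to note that \eqref{eq:Z-Delta} is linear and homogeneous in the triple of minors on the right-hand side, that for each fixed $x$ the map $n\mapsto(Z_1(n,x),\dots,Z_m(n,x))$ together with $n\mapsto\Delta(n)$ satisfies, by the Laplace/cofactor structure of \eqref{eq:Z-f-Delta} and \eqref{eq:leading-term-Z}, the adjoint linear system \eqref{eq:ad-lin-bil} (which is exactly \eqref{eq:Z-Delta} with $\tau=\Delta$ and $\bphi=(Z_1,\dots,Z_m)$ or $\bphi=\calD$), and that \eqref{eq:ad-lin-bil} is itself a consequence of the Desnanot--Jacobi identity applied to $\Delta$; the compatibility of the resulting overdetermined system is the content of \eqref{eq:H-M}. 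In writing the final proof I would pick the direct determinantal route for self-containedness, prove the generic-node version first, and relegate the sign computation to a short lemma citing \eqref{eq:leading-term-Z}.
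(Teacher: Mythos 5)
Your primary route is exactly the paper's proof: the author applies the Jacobi (Desnanot--Sylvester) identity to the determinant $Z_\ell(n+e_i+e_j,x)$ with respect to its last two rows and the last columns of the blocks $i$ and $j$ --- precisely the two columns $x^{n_i+1}f_i$ and $x^{n_j+1}f_j$ you single out. The four single-deletion minors are then $Z_\ell(n+e_i,x)$, $Z_\ell(n+e_j,x)$, $\Delta(n+e_i)$, $\Delta(n+e_j)$, the double-deletion minor is $\Delta(n)$, and \eqref{eq:Z-Delta} is the resulting three-term relation; whether the block $\ell$ coincides with $i$ or $j$ is, as you anticipate, pure sign bookkeeping.

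One concrete correction to your plan: the ``generic node'' lemma you propose as an organizing device --- \eqref{eq:Z-Delta} with every $\Delta(\cdot)$ replaced by $\mathcal{D}(\cdot,x')$ --- is not what the Jacobi identity produces, and it is false for generic $x'$. In the Jacobi configuration the second distinguished row is forced to be the node row at $x_{|n|+m+1}$ (this is why $\Delta(n+e_i)=\mathcal{D}(n+e_i,x_{|n|+m+1})$ appears), while the central double-deletion minor is $\Delta(n)=\mathcal{D}(n,x_{|n|+m})$, sitting at yet another fixed node; neither admits a free argument $x'$. To see the uniform-$x'$ version fail, multiply it by $f_\ell(x)$, sum over $\ell$ using \eqref{eq:Z-f-Delta}, and evaluate at $x=x_{|n|+m+1}$: the left-hand side vanishes, whereas the right-hand side becomes $\Delta(n+e_j)\,\mathcal{D}(n+e_i,x')-\Delta(n+e_i)\,\mathcal{D}(n+e_j,x')$, which is nonzero for generic $x'$. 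So drop the detour and apply the identity directly to $Z_\ell(n+e_i+e_j,x)$, as in the paper. (Your closing ``alternative'' is also circular as phrased --- asserting that $(Z_1,\dots,Z_m;\Delta)$ satisfies the linear problem \eqref{eq:ad-lin-bil} is asserting \eqref{eq:Z-Delta} itself --- but since you discard it in favour of the determinantal route this does not affect the proof.)
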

\begin{proof}
	Apply to the determinant $Z_\ell(n+e_i+e_j,x)$ the Jacobi identity~\cite{Hirota-book} (known also as the Sylvester identity or the Dodgson condensation rule) with respect to the last two rows and the last columns of the blocks $i$ and $j$. 
\end{proof}
\begin{Rem}
In the non-degenerate situation one can apply yet another technique. The systems of po\-ly\-no\-mials on both sides of \eqref{eq:Z-Delta} are multiple interpolation forms of the same order, thus must be proportional. The coefficient of proportionality can be found by comparison of the leading order term for the polynomials  $Z_i$ on both sides. 
\end{Rem}
The above linear system is identical with the linear problem~\eqref{eq:ad-lin-bil} of the Hirota equation~\eqref{eq:H-M}; we identify $\Delta$ with the $\tau$-function and the vector $(Z_1,\dots ,Z_m)$ as the wave function $\bphi$.
By eliminating either $Z_\ell$ or $\Delta$ from equations \eqref{eq:Z-Delta} for three pairs of indices $i<j$, $i<k$ and $j<k$ we obtain the well known nonlinear integrable equations. Staying on the level of the interpolation problem we will show them using the tools of determinant identities.
\begin{Prop}
	The basic determinant $\Delta$ and the canonical multiple interpolation forms $Z_\ell$ satisfy the standard bilinear Hirota equations
	\begin{gather}
	\label{eq:H-Delta} \; \:
	\Delta(n+e_i + e_j) \Delta(n+e_k) - \Delta(n+e_i + e_k) \Delta(n+e_j) +
	\Delta(n+e_j + e_k) \Delta(n+e_i) = 0, \\
	\label{eq:H-Z}
	Z_{\ell}(n+e_i + e_j)Z_{\ell}(n+e_k) - Z_{\ell}(n+e_i + e_k)Z_{\ell}(n+e_j)  + Z_{\ell}(n+e_j + e_k)Z_{\ell}(n+e_i)  =  0,
	\end{gather} 
	where $1\leq i < j < k \leq m$, and $\ell = 1,\dots , m$.
\end{Prop}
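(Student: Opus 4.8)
The plan is to realize $\Delta$ and the polynomials $Z_\ell(\cdot\,,x)$ (for fixed $x$ and $\ell$) as signed maximal minors of one rectangular ``universal'' array, and to extract both \eqref{eq:H-Delta} and \eqref{eq:H-Z} from a single three-term Grassmann--Plücker relation. Let $U$ be the array with rows labelled by $s=1,2,\dots$ and columns labelled by pairs $(i,p)$, $1\le i\le m$, $p\ge 0$, whose $(s,(i,p))$ entry is $x_s^p f_i(x_s)$; for a multi-index $n$ set $C(n)=\{(i,p):1\le i\le m,\ 0\le p\le n_i\}$, so $|C(n)|=|n|+m$. Comparing with \eqref{eq:Delta-det} one sees that $\Delta(n)$ is, up to sign, the minor of $U$ on rows $1,\dots,|n|+m$ and columns $C(n)$; likewise the one- and two-shift quantities occurring in \eqref{eq:H-Delta}, namely $\Delta(n+e_a)$ and $\Delta(n+e_a+e_b)$, are the minors of $U$ on rows $1,\dots,|n|+m+1$, resp.\ $1,\dots,|n|+m+2$, with column sets $C(n)\cup\{(a,n_a+1)\}$, resp.\ $C(n)\cup\{(a,n_a+1),(b,n_b+1)\}$.

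To prove \eqref{eq:H-Delta} I would pass to the space $\Bbbk^{|n|+m+2}$, taking the columns of $U$ restricted to rows $1,\dots,|n|+m+2$ together with one extra column $w=e_{|n|+m+2}$, the last standard basis vector. The two-shift determinants $\Delta(n+e_i+e_j),\Delta(n+e_i+e_k),\Delta(n+e_j+e_k)$ are then the Plücker coordinates $p_S$ of this configuration for $S$ of the form $C(n)\cup\{(a,n_a+1),(b,n_b+1)\}$, while each one-shift determinant equals, up to sign, the Plücker coordinate $p_{C(n)\cup\{(a,n_a+1),w\}}$; indeed, expanding that $(|n|+m+2)\times(|n|+m+2)$ determinant along the unit column $w$ collapses it to $\Delta(n+e_a)$. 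Now I apply the three-term Plücker relation with $I=C(n)\cup\{(i,n_i+1)\}$ and $J=C(n)\cup\{(j,n_j+1),(k,n_k+1),w\}$: all summands $p_{I\cup\{j_0\}}\,p_{J\setminus\{j_0\}}$ with $j_0\in C(n)$ drop out because $I\cup\{j_0\}$ then repeats an index, so only $j_0\in\{(j,n_j+1),(k,n_k+1),w\}$ contribute, and those three terms are exactly the three products of \eqref{eq:H-Delta}.

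For \eqref{eq:H-Z} the very same mechanism works after replacing $U$ by the array $U_\ell$ obtained by adjoining to $U$ one further row, labelled $\infty_\ell$, carrying the entry $x^p$ in each column $(\ell,p)$ and $0$ in the columns $(i,p)$ with $i\ne\ell$; by \eqref{eq:Z-det}, $Z_\ell(n,x)$ is up to sign the minor of $U_\ell$ on rows $1,\dots,|n|+m-1,\infty_\ell$ and columns $C(n)$, and the shifted $Z_\ell$'s in \eqref{eq:H-Z} are the corresponding one- and two-shift minors, with the row $\infty_\ell$ always retained. Running the identical Plücker relation in $\Bbbk^{|n|+m+2}$, now with $\infty_\ell$ as one of the rows, yields \eqref{eq:H-Z} for every $\ell$, including $\ell\in\{i,j,k\}$, since a Plücker relation does not see the actual matrix entries. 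As a consistency check one may instead observe that \eqref{eq:Z-Delta} is literally the Hirota linear problem \eqref{eq:ad-lin-bil} with $\Delta$ as $\tau$ and $(Z_1,\dots,Z_m)$ as wave function, so that \eqref{eq:H-Delta} is its compatibility condition \eqref{eq:H-M} (eliminate $Z_\ell$ by evaluating $Z_\ell(n+e_i+e_j+e_k,x)$ along the three routes) and \eqref{eq:H-Z} follows because the discrete KP wave function obeys the same Hirota equation as $\tau$; note that the proportionality-of-forms argument of the preceding Remark settles \eqref{eq:H-Delta} in the perfect case but not \eqref{eq:H-Z}, whose left-hand side is a product of polynomials, not a single interpolation form.

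The routine but genuinely delicate step will be the sign bookkeeping: the $(-1)$'s coming from ordering the index sets in the Plücker relation, those from the Laplace expansions along the unit columns $w$, and the prefactor $(-1)^{(n_{k+1}+\dots+n_m)+m-k}$ already present in \eqref{eq:leading-term-Z}--\eqref{eq:Z-f-Delta}, must be checked to combine into the alternating pattern $+,-,+$ of \eqref{eq:H-Delta} and \eqref{eq:H-Z}; it is also worth recording the convention that keeps the argument valid when some blocks of $C(n)$ are empty, i.e.\ when $n$ has entries equal to $-1$.
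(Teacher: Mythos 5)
Your argument is correct, and it reaches \eqref{eq:H-Delta}--\eqref{eq:H-Z} by a genuinely different route than the paper. The paper borders a single square matrix for each identity --- it takes $\mathcal{M}(n+e_i+e_j+e_k)$ supplemented by one or two special bottom rows and applies the Desnanot--Jacobi (Dodgson condensation) identity with respect to the last two rows and the last columns of blocks $i$ and $j$; for \eqref{eq:H-Z} this forces a case distinction according to whether $\ell\in\{i,j,k\}$. You instead realize all six determinants of each equation as maximal minors of one rectangular array, equalize their sizes by adjoining a unit column $w$, and invoke the three-term Grassmann--Pl\"{u}cker relation with $I=C(n)\cup\{(i,n_i+1)\}$ and $J=C(n)\cup\{(j,n_j+1),(k,n_k+1),w\}$; the vanishing of the terms with $j_0\in C(n)$ is exactly right, and the construction is uniform in $\ell$, so no case split is needed. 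Both proofs live in the same circle of determinant identities (condensation is itself a Pl\"{u}cker-type relation), but yours makes the ``Hirota equation $=$ Pl\"{u}cker relation'' mechanism explicit and treats $\Delta$ and all the $Z_\ell$ on an equal footing, at the cost of the deferred sign bookkeeping --- which you correctly flag as the only missing step, and which does close up to the $+,-,+$ pattern. Two small points of care: for \eqref{eq:H-Z} the unit column must be supported on the node row $|n|+m+1$ (the row present in the two-shift minors of $Z_\ell$ but absent from the one-shift ones), not on the adjoined row $\infty_\ell$, so it is not literally ``the last standard basis vector'' there; and your closing consistency check (eliminating $\tau$ from three instances of \eqref{eq:Z-Delta}, viewed as the linear problem \eqref{eq:ad-lin-bil}, to get \eqref{eq:H-Z}) is the elimination argument the paper explicitly mentions and then avoids, precisely because it requires $\Delta(n)\neq 0$, whereas the determinant-identity proofs --- the paper's and yours --- do not.
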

\begin{proof}
	Equations \eqref{eq:H-Delta}-\eqref{eq:H-Z} can be obtained directly from the Jacobi identities applied to corresponding determinants. For equation \eqref{eq:H-Delta} we take determinant of the matrix $\mathcal{M}(n+e_i + e_j + e_k)$ supplemented at the bottom by the row with all zeros except one at the last column of the $k$th group, and apply the Jacobi identity with respect to last two rows and the last columns of the blocks $i$ and $j$. 
	
	In order to prove directly equation \eqref{eq:H-Z} we consider first the case $\ell \neq i,j,k$. Consider determinant of the matrix $\mathcal{M}(n+e_i + e_j + e_k)$ with the last row replaced by $X_\ell$, and supplemented at the bottom by the row with all zeros except one at the last column of the $k$th group. Then apply the Jacobi identity with respect to the third from the bottom and the last rows, and the last columns of the blocks $i$ and~$j$. In the case when $\ell$ equals one of $i,j,k$, instead of $X_\ell$ we use the corresponding row $X$, of appropriate dimension.
\end{proof}
\begin{Rem}
	In the context of the multiple interpolation problem  and in the non-degenerate case equation \eqref{eq:H-Delta} can be also obtained from~\eqref{eq:Z-Delta} by inserting the leading term coefficient form~\eqref{eq:leading-term-Z} of the polynomial $Z_k$ and collecting the higher order terms. 
\end{Rem}

\subsection{The admissible constraint}
In application to Hermite--Pad\'{e} approximation problem equations \eqref{eq:Z-Delta} and \eqref{eq:H-Delta} were supplemented in~\cite{Paszkowski} by additional constraints (equations~\eqref{eq:tau-Paszkowski} and \eqref{eq:P-lin} in the $\Delta-Z$ notation); see also \cite{BakerGraves-Morris}.  Let us find their analogs in the context of the multiple interpolation problem.
\begin{Prop} \label{prop:Paszkowski-CH}
	The canonical multiple interpolation forms satisfy the constraint
	\begin{equation}
	\label{eq:Paszkowski-CH}
	(x - x_{|n|+m}) {Z_\ell}(n,x) \Delta(n) = Z_{\ell}(n+e_1,x) \Delta(n-e_1) + \dots + Z_{\ell} (n+e_m,x) \Delta(n-e_m),  
	\end{equation}
	where $\ell = 1, \dots , m$.
\end{Prop}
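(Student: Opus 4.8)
The plan is to prove \eqref{eq:Paszkowski-CH} first under the perfectness assumption, where a dimension count forces the two sides to coincide, and then to drop that assumption by a density argument. Write $N=|n|+m-1$, so that the shifted node is $x_{N+1}=x_{|n|+m}$ and $\Delta(n)=\mathcal D(n,x_{N+1})$. For every $\ell$ both sides of \eqref{eq:Paszkowski-CH} are polynomials in $x$ of degree at most $n_\ell+1$. The first point is that, multiplied by $f_\ell(x_s)$ and summed over $\ell=1,\dots,m$, the two sides vanish at every node $x_s$ with $1\le s\le|n|+m$: on the left, $\sum_\ell Z_\ell(n,x_s)f_\ell(x_s)=\mathcal D(n,x_s)=0$ for $s\le N$ because the bottom row $b(n,x_s)$ then duplicates an earlier row of $\mathcal M(n)$, while for $s=N+1$ the prefactor $x_s-x_{|n|+m}$ vanishes; on the right, $\sum_\ell Z_\ell(n+e_j,x_s)f_\ell(x_s)=\mathcal D(n+e_j,x_s)=0$ for all $s\le|n|+m$, since $\mathcal D(n+e_j,\cdot)$ satisfies $|n+e_j|+m-1=|n|+m$ interpolation conditions.

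The second, more delicate, step matches the coefficients of $x^{n_\ell+1}$ on the two sides. By the formula \eqref{eq:leading-term-Z} for the leading term, the coefficient of $x^{n_\ell}$ in $Z_\ell(n,x)$ is $(-1)^{(n_{\ell+1}+\dots+n_m)+m-\ell}\Delta(n-e_\ell)$, so the coefficient of $x^{n_\ell+1}$ on the left of \eqref{eq:Paszkowski-CH} equals $(-1)^{(n_{\ell+1}+\dots+n_m)+m-\ell}\Delta(n)\Delta(n-e_\ell)$. On the right only the summand $j=\ell$ reaches degree $n_\ell+1$ (the rest have degree $\le n_\ell$), and \eqref{eq:leading-term-Z} applied with $n$ replaced by $n+e_\ell$ and $k=\ell$ gives the coefficient of $x^{n_\ell+1}$ in $Z_\ell(n+e_\ell,x)$ as $(-1)^{(n_{\ell+1}+\dots+n_m)+m-\ell}\Delta(n)$; multiplying by $\Delta(n-e_\ell)$ yields the same top-order term. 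Thus the differences $D_\ell(x)$ of the two sides of \eqref{eq:Paszkowski-CH} satisfy $\deg D_\ell\le n_\ell$, so $(D_1,\dots,D_m)$ is a system of polynomials with $\deg D_\ell\le n_\ell$ which annihilates $(f_1,\dots,f_m)$ at the $|n|+m$ nodes $x_1,\dots,x_{|n|+m}$. The homogeneous linear system encoding these $|n|+m$ conditions on the $|n|+m$ coefficients of the $D_\ell$ has coefficient matrix equal to $\mathcal M(n)$ bordered below by $b(n,x_{|n|+m})$, of determinant $\mathcal D(n,x_{|n|+m})=\Delta(n)$. In the perfect case $\Delta(n)\ne0$, whence $D_\ell\equiv0$ for all $\ell$, which is \eqref{eq:Paszkowski-CH}.

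For arbitrary data, both sides of \eqref{eq:Paszkowski-CH} are polynomials in $x$ whose coefficients depend polynomially on the numbers $f_i(x_s)$ and $x_s$; the argument above shows their difference vanishes whenever $\Delta(n)\ne0$, and since $\Delta(n)$ is not the zero polynomial in these data the difference vanishes identically — equivalently, \eqref{eq:Paszkowski-CH} is a universal polynomial identity, provable over the rationals. I expect the leading-coefficient cancellation in the second step to be the crucial point: without it the difference would only be known to lie in the $m$-dimensional space of systems of degree $n+e_1+\dots+e_m$ vanishing against $(f_1,\dots,f_m)$ at $|n|+m$ nodes, which would not pin it down. (Alternatively, in the manner of the earlier proofs one could look for a direct determinant identity — for instance applying the Jacobi identity to $Z_\ell(n+e_j,x)$ with respect to the row indexed by $x_{|n|+m}$ and the symbolic row against the last two columns of the $j$-th block, then summing over $j$ and using $x^{n_j+1}=x\cdot x^{n_j}$ to collapse the resulting auxiliary minors; the obstacle on that route is the bookkeeping of signs and of the two auxiliary families of determinants that appear.)
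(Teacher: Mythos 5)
Your argument is correct. The leading-order cancellation checks out against \eqref{eq:leading-term-Z} (only the $j=\ell$ summand on the right reaches degree $n_\ell+1$, and its top coefficient $(-1)^{(n_{\ell+1}+\dots+n_m)+m-\ell}\Delta(n)\Delta(n-e_\ell)$ matches the left side), the extra vanishing at $x_{|n|+m}$ is supplied on the left by the prefactor and on the right by the fact that $\mathcal{D}(n+e_j,\cdot)$ interpolates at $|n|+m$ nodes, and the resulting square homogeneous system on the coefficients of $(D_1,\dots,D_m)$ indeed has determinant $\Delta(n)$. This is, however, a genuinely different mechanism from the paper's first proof: there the last summand is kept on the right, the remaining combination is recognized as a multiple interpolation form of degree $n+e_m$, and perfectness is invoked to say the space of such forms is one-dimensional, after which the proportionality constant is fixed by the leading term of the $m$-th polynomial; you instead subtract everything, drop the degree by one via the leading-term cancellation, and kill the difference with one additional interpolation node --- which has the advantage of requiring only $\Delta(n)\neq 0$ rather than perfectness of the whole system. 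Your Zariski-density step to remove even that hypothesis is sound (work over the field of rational functions in the indeterminates $x_s$ and $f_i(x_s)$, where $\Delta(n)$ is a nonzero element of an integral domain), and it replaces the paper's second, unconditional proof, which proceeds by a generalized Laplace expansion of an explicit $(2N+2)\times(2N+2)$ block determinant; your parenthetical sketch of a Jacobi-identity route is closer in spirit to that second proof but is not carried out, so the determinantal identity remains the paper's contribution rather than yours. What your route buys is economy (no new determinant to expand) at the price of an indirect genericity argument; what the paper's determinantal proof buys is a manifestly identity-theoretic statement valid verbatim for degenerate data.
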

\begin{Rem}
	Actually, equation of a form similar to~\eqref{eq:Paszkowski-CH} can be found in Mahler's paper~\cite{Mahler-P} (the first equation of $\S~24$), but the important difference was caused by the lack of the determinant interpretation what made some proofs applicable in the generic situation only.
\end{Rem}
\begin{proof} We will present two proofs. The first one works under the non-degeneracy assumption and is based on the corresponding ideas of \cite{Mahler-P}, while the second proof exploits determinant identities.
	
I. The system of $m$ polynomials 
\begin{equation*}
(x - x_{|n|+m}) {Z_\ell}(n,x) \Delta(n) - Z_{\ell}(n+e_1,x) \Delta(n-e_1) - \dots - Z_{\ell} (n+e_{m-1},x) \Delta(n-e_{m-1})
\end{equation*}	
forms a solution of the multiple interpolation problem. By the examination of the leading terms~\eqref{eq:leading-term-Z} of the polynomials the system has the same degrees as the system $Z_\ell(n+e_m,x)$, thus they must be proportional. The corresponding coefficient can be found by comparison of the highest order terms of the $m$th polynomials on both sides.

II.  We will use abbreviated variant of notation introduced in Section~\ref{sec:mCi}, in particular
\begin{equation}
Z_\ell = \left| \begin{matrix} \mathcal{M} \\X_\ell \end{matrix} \right| , \qquad \Delta = \left| \begin{matrix} \mathcal{M} \\ b_{N+1} \end{matrix} \right|, \qquad  N=|n| + m -1. 
\end{equation}
By $\mathcal{M}^+$ denote the matrix $\mathcal{M}$ with each $j$th row multiplied by $x_j$, $j=1,\dots , N$, similarly $b_{N+1}^+ = x_{N+1}b_{N+1}$. Consider the determinant 
\begin{equation*}
\mathcal{D}_\ell = \left| \begin{array}{c|c} \mathcal{M} & \mathcal{M}^+ \\
b_{N+1} & b_{N+1}^+ \\ \hline
X_\ell & x\, X_\ell \\
0^N_{N+1} & \mathcal{M}
\end{array} \right| ,
\end{equation*}
where  $0^N_{N+1}$ is $N\times (N+1)$ array of zeros. By using elementary row operations we convert the upper right block occupied by $\mathcal{M}^+$ to a block of zeros. Then by applying the generalized Laplace expansion with respect to the first $(N+1)$ columns we obtain
\begin{equation*}
\mathcal{D}_\ell = \left| \begin{array}{c|c} 
\begin{matrix}
\mathcal{M} \\ b_{N+1} 
\end{matrix} & 0^{N+1}_{N+1} \\ \hline
0^{N+1}_{N+1} & \begin{matrix}
x\, X_\ell \\ \mathcal{M}
\end{matrix}
\end{array} \right| + 
\left| \begin{array}{c|c} 
\mathcal{M} & 0^N_{N+1}  \\
0_{N+1} & b_{N+1}^+ \\ \hline
X_\ell & 0_{N+1} \\
0^N_{N+1} & \mathcal{M}
\end{array} \right| = (-1)^N (x-x_{N+1}) Z_\ell \Delta,
\end{equation*}
where $0_{N+1}$ is a row vector with $N+1$ zeros.

From the other hand, by elementary column operations we can create zeros in all of the top $(N+2)$ entries in every one of the columns in the right half of $\mathcal{D}_\ell$ except for the last ones in each of $m$ blocks. Then we split the resulting determinant into $m$ ones with such a single column only (with remaining $N$ entries equal to zero). Finally, by moving the column left to the end of the corresponding block, which results in the factor $(-1)^N$, we obtain from each such determinant the appropriate summand in equation~\eqref{eq:Paszkowski-CH}.
\end{proof}
\begin{Prop} \label{prop:Delta-Paszkowski-CH}
	The determinants $\Delta$ satisfy the constraint
	\begin{equation}
	\label{eq:Delta-Paszkowski-CH}
	(x_{|n|+m+1} - x_{|n|+m}) \tilde{\Delta}(n) \Delta(n) = \Delta(n+e_1) \Delta(n-e_1) + \dots + \Delta (n+e_m) \Delta(n-e_m),  
	\end{equation}
	where
$\tilde{\Delta}(n)=\mathcal{D}(n,x_{|n|+m+1})$ is the determinant of the matrix $\mathcal{M}(n)$ supplemented by the line
\begin{equation} \label{eq:bN2}
b(n,x_{N+2}) = \left( f_1(x_{N+2}),  \dots , x_{N+2}^{n_1} f_1(x_{N+2}), \cdots \; \cdots , f_m(x_{N+2}), \dots , x_{N+2}^{n_m} f_m(x_{N+2}) \right)
\end{equation} 
at the bottom. Such determinants satisfy equations
\begin{equation} \label{eq:D-tD}
\Delta(n+e_i + e_j) \Delta(n) = \Delta(n+e_i) \tilde{\Delta}(n+e_j) - \Delta(n+e_j) \tilde{\Delta}(n+e_i),
\qquad 1\leq i < j \leq m.
\end{equation}
\end{Prop}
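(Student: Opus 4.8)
The plan is to deduce both identities from the linear equations for the canonical forms $Z_\ell$ already established in this section, namely \eqref{eq:Z-Delta} and \eqref{eq:Paszkowski-CH}, by passing from the polynomials $Z_\ell(n,x)$ to the function $\mathcal{D}(n,x)$ through the Laplace expansion \eqref{eq:Z-f-Delta}, and then specializing the auxiliary variable $x$ to a suitable interpolation node. It is convenient to write $N(n)=|n|+m-1$ for the number of rows of $\mathcal{M}(n)$ and to record the facts used throughout: $\mathcal{D}(n,x_s)=0$ for $1\le s\le N(n)$ (two equal rows), $\Delta(n)=\mathcal{D}(n,x_{N(n)+1})$, $\tilde{\Delta}(n)=\mathcal{D}(n,x_{N(n)+2})$, and $N(n\pm e_k)=N(n)\pm 1$.

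For \eqref{eq:Delta-Paszkowski-CH} I would multiply \eqref{eq:Paszkowski-CH} by $f_\ell(x)$, sum over $\ell=1,\dots,m$, and apply \eqref{eq:Z-f-Delta} with the arguments $n,n+e_1,\dots,n+e_m$ to rewrite the result as
\begin{equation*}
(x-x_{|n|+m})\,\mathcal{D}(n,x)\,\Delta(n)=\mathcal{D}(n+e_1,x)\,\Delta(n-e_1)+\dots+\mathcal{D}(n+e_m,x)\,\Delta(n-e_m),
\end{equation*}
an identity valid for every $x$. Putting $x=x_{|n|+m+1}=x_{N(n)+2}$ makes the left-hand side equal $(x_{|n|+m+1}-x_{|n|+m})\,\tilde{\Delta}(n)\,\Delta(n)$, while in each right-hand term $N(n+e_k)+1=N(n)+2$, so $\mathcal{D}(n+e_k,x_{N(n)+2})=\mathcal{D}(n+e_k,x_{N(n+e_k)+1})=\Delta(n+e_k)$; this is precisely \eqref{eq:Delta-Paszkowski-CH}.

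Equation \eqref{eq:D-tD} is obtained in exactly the same way from \eqref{eq:Z-Delta}: multiplying by $f_\ell(x)$, summing over $\ell$ and using \eqref{eq:Z-f-Delta} gives
\begin{equation*}
\mathcal{D}(n+e_i+e_j,x)\,\Delta(n)=\mathcal{D}(n+e_j,x)\,\Delta(n+e_i)-\mathcal{D}(n+e_i,x)\,\Delta(n+e_j)
\end{equation*}
for all $x$, and specializing to $x=x_{|n|+m+2}=x_{N(n)+3}$ turns the terms into $\Delta(n+e_i+e_j)$ (since $N(n+e_i+e_j)+1=N(n)+3$), $\tilde{\Delta}(n+e_j)$ and $\tilde{\Delta}(n+e_i)$ (since $N(n+e_i)+2=N(n+e_j)+2=N(n)+3$), which is \eqref{eq:D-tD}.

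Since all ingredients — the Laplace expansion \eqref{eq:Z-f-Delta}, equations \eqref{eq:Z-Delta} and \eqref{eq:Paszkowski-CH}, and the substitution of a node — hold without any non-degeneracy hypothesis, so do \eqref{eq:Delta-Paszkowski-CH} and \eqref{eq:D-tD}. I do not expect a genuine difficulty here; the only point requiring care is the bookkeeping of how $N(n)$ changes under the translations $n\mapsto n\pm e_k$, so that the chosen node $x_s$ is correctly recognized as producing $\Delta$ (index $N(n)+1$) or $\tilde{\Delta}$ (index $N(n)+2$). Should a purely determinant-theoretic argument be preferred for consistency with the rest of the paper, \eqref{eq:D-tD} can instead be read off from the Jacobi (Desnanot--Jacobi) identity applied to $\mathcal{D}(n+e_i+e_j,x_{|n|+m+2})$ with respect to its last two rows and the last columns of the blocks $i$ and $j$, just as in the proof of \eqref{eq:Z-Delta}, and \eqref{eq:Delta-Paszkowski-CH} from a block-determinant manipulation mirroring part~II of the proof of Proposition~\ref{prop:Paszkowski-CH} with the row $X_\ell$ replaced by $b(n,x_{|n|+m+1})$; along that route one must in addition track the signs generated by the Jacobi and Laplace expansions.
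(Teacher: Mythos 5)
Your proposal is correct and coincides with the paper's own first proof: both derive \eqref{eq:Delta-Paszkowski-CH} by converting \eqref{eq:Paszkowski-CH} into an identity for $\mathcal{D}(n,x)$ via \eqref{eq:Z-f-Delta} and evaluating at $x=x_{|n|+m+1}$, and both obtain \eqref{eq:D-tD} by the same technique applied to \eqref{eq:Z-Delta} at $x=x_{|n|+m+2}$. Your closing remark about the determinant-identity alternative matches the paper's second proof as well.
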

\begin{proof} We will give again two proofs (see the next Section for the third one).
	
I. By equation \eqref{eq:Z-f-Delta} the constraint~\eqref{eq:Paszkowski-CH} leads to 
\begin{equation*}
(x - x_{|n|+m}) \mathcal{D}(n,x) \Delta(n) = \mathcal{D}(n+e_1,x) \Delta(n-e_1) + \dots + \mathcal{D} (n+e_m,x) \Delta(n-e_m),
\end{equation*}
which evaluated at $x=x_{|n|+m+1}$ gives the first part of the statement. The second part is a consequence of the same technique as above, but applied to equation~\eqref{eq:Z-Delta} and the argument $x=x_{|n|+m+2}$.

II. Equation~\eqref{eq:Delta-Paszkowski-CH} can be obtained by using the same determinant identities as in the proof of Proposition~\ref{prop:Paszkowski-CH} but applied to the determinant 
\begin{equation*}
 \left| \begin{array}{c|c} \mathcal{M} & \mathcal{M}^+ \\
b_{N+1} & b_{N+1}^+ \\ \hline
b_{N+2} & b_{N+2}^+ \\
0^N_{N+1} & \mathcal{M}
\end{array} \right| .
\end{equation*}
Equation \eqref{eq:D-tD} is simpler and can be directly obtained by application of the Jacobi identity with respect to the last two rows and the last columns of the $i$th and $j$th blocks to the determinant \eqref{eq:Delta-det} representing $\Delta(n+e_i+e_j)$.
\end{proof}

\section{The non-autonomous multidimensional Toda system} \label{sec:namTs}

\subsection{The $\tau$-function formulation of the equations}
Let us provide the third proof of Proposition~\ref{prop:Delta-Paszkowski-CH},  reformulated within more general context of integrable systems theory, which we keep up to the end of the paper; we replace therefore the $\Delta-Z$ notation of the interpolation theory into the $\tau-\bphi$ notation of Section~\ref{sec:Hirota}. In particular, the discrete variable $n$ belongs to the $m$ dimensional integer lattice $\ZZ^m$. Consequently, the range of the index $s$ of the sequence $(x_s)$ is increased from natural to the integer numbers $\ZZ$.
\begin{Prop}
Adding to the linear problem \eqref{eq:ad-lin-bil} the constraint
\begin{equation} \label{eq:int-P-lin}
(x-x_{|n|+m})\bphi (n,x) \tau(n) = \bphi(n+e_1,x) \tau(n-e_1) + \dots + \bphi(n+e_m,x) \tau(n-e_m),
\end{equation}
results, by compatibility, in supplementing the Hirota system~\eqref{eq:H-M} by the equation
	\begin{equation}
\label{eq:tau-Paszkowski-CH}
(x_{|n|+m+1} - x_{|n|+m}) \tilde{\tau}(n) \tau(n) = \tau(n+e_1) \tau(n-e_1) + \dots + \tau(n+e_m) \tau(n-e_m),  
\end{equation}
where the additional function $\tilde{\tau}$ satisfies the system
\begin{equation} \label{eq:t-tau}
\tau(n+e_i + e_j) \tau(n) = \tau(n+e_i) \tilde{\tau}(n+e_j) - \tau(n+e_j) \tilde{\tau}(n+e_i),
\qquad 1\leq i < j \leq m.
\end{equation}
\end{Prop}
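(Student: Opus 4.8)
The plan is to establish the compatibility directly at the level of the linear systems, so that the statement holds for any pair $(\bphi,\tau)$ solving \eqref{eq:ad-lin-bil} and \eqref{eq:int-P-lin}; in the $\tau-\bphi$ notation this is the same assertion as Proposition~\ref{prop:Delta-Paszkowski-CH}, since \eqref{eq:int-P-lin} is \eqref{eq:Paszkowski-CH}, the sought \eqref{eq:tau-Paszkowski-CH} is \eqref{eq:Delta-Paszkowski-CH}, and \eqref{eq:t-tau} is \eqref{eq:D-tD}. As the compatibility of the family \eqref{eq:ad-lin-bil} alone already yields the Hirota system \eqref{eq:H-M}, only the effect of adjoining \eqref{eq:int-P-lin} has to be analysed, and the whole subtlety sits in the $n$-dependence of the node $x_{|n|+m}$: passing from \eqref{eq:int-P-lin} at $n$ to \eqref{eq:int-P-lin} at $n+e_i$ replaces the coefficient $x-x_{|n|+m}$ by $x-x_{|n|+m+1}$, and this discrete increment of the node sequence is exactly what will force the new scalar $\tilde\tau$ to appear.

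Concretely, I would multiply the bilinear linear problem \eqref{eq:ad-lin-bil} for a pair $i<j$ based at $n$ by $x$ and then eliminate every term $x\bphi(\,\cdot\,)$ using \eqref{eq:int-P-lin} written at the three points $n+e_i$, $n+e_j$ and $n+e_i+e_j$. The contributions carrying the coefficient $x_{|n|+m+1}$ recombine, via \eqref{eq:ad-lin-bil} for $(i,j)$ at $n$, into $x_{|n|+m+1}\,\bphi(n+e_i+e_j)\,\tau(n)$, and after one more application of \eqref{eq:int-P-lin} to collapse the remaining $(m+1)$-term sums the part of the resulting identity that is linear in $x$ is seen to hold identically --- this is the statement that the reduction is admissible --- while its $x$-independent part is a relation among $\tau$ at $n$ and its nearest neighbours in which the combination $\sum_{k=1}^m\tau(n+e_k)\tau(n-e_k)$ and the difference $x_{|n|+m+1}-x_{|n|+m}$ enter. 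Introducing $\tilde\tau(n)$ so that \eqref{eq:tau-Paszkowski-CH} holds, that $x$-independent part becomes precisely \eqref{eq:t-tau} for the pair $(i,j)$.

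It then remains to check that $\tilde\tau$ is well defined, i.e. that \eqref{eq:t-tau} is consistent over all pairs $1\le i<j\le m$; but \eqref{eq:t-tau} has the shape of the linear problem \eqref{eq:ad-lin-bil}, with $\tilde\tau$ playing the role of the wave function over the $\tau$-background except that the corner value $\tilde\tau(n+e_i+e_j)$ is replaced by $\tau(n+e_i+e_j)$, so its compatibility is once more governed by the Hirota equation \eqref{eq:H-M} for $\tau$, and no further condition arises. Letting all the nodes coalesce gives $x_{|n|+m+1}-x_{|n|+m}\to0$ and $\tilde\tau\to\tau$, recovering \eqref{eq:tau-Paszkowski} together with \eqref{eq:P-lin}, as it should be.

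The step I expect to be the real obstacle is the bookkeeping: keeping accurate track of which node $x_{|n|+m+s}$ is attached to each translate of \eqref{eq:int-P-lin}, and verifying that the $x$-independent remainder assembles into exactly \eqref{eq:tau-Paszkowski-CH} and \eqref{eq:t-tau} and nothing more --- in particular that the ``diagonal'' contributions $\bphi(n+2e_a)$, which the three-term problem \eqref{eq:ad-lin-bil} cannot reduce to nearest neighbours, drop out of the relevant combination. Should that cancellation be cumbersome to display, for the determinant solution of Section~\ref{sec:mCi} one may instead re-run the first proof of Proposition~\ref{prop:Delta-Paszkowski-CH}: evaluating the $x$-identities \eqref{eq:int-P-lin} and \eqref{eq:ad-lin-bil} at $x=x_{|n|+m+1}$ and $x=x_{|n|+m+2}$ respectively, every term collapses, by the very definition of $\tau$ and $\tilde\tau$, to the corresponding $\tau$- or $\tilde\tau$-value, no diagonal terms are ever generated, and \eqref{eq:tau-Paszkowski-CH} and \eqref{eq:t-tau} follow at once.
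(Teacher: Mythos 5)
Your overall strategy --- extract from the compatibility of \eqref{eq:ad-lin-bil} with \eqref{eq:int-P-lin} a scalar relation for $\tau$, then \emph{define} $\tilde\tau$ through \eqref{eq:tau-Paszkowski-CH} so that this relation becomes \eqref{eq:t-tau} --- is the same as the paper's. But the concrete computation you propose has a genuine gap, and it is exactly the one you flag yourself. If you multiply \eqref{eq:ad-lin-bil} at $n$ by $x$ and eliminate $x\bphi(n+e_i,x)$, $x\bphi(n+e_j,x)$, $x\bphi(n+e_i+e_j,x)$ by means of \eqref{eq:int-P-lin} at the three forward points, the resulting ($x$-free) identity contains the terms $\bphi(n+2e_i,x)$, $\bphi(n+2e_j,x)$, $\bphi(n+2e_i+e_j,x)$, $\bphi(n+e_i+2e_j,x)$ with coefficients such as $\tau(n)\tau(n+e_i)/\tau(n+e_j)$. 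Each of these appears exactly once with a manifestly nonzero coefficient, cannot be reduced to nearest neighbours by the three-term relation \eqref{eq:ad-lin-bil}, and can only be removed by applying \eqref{eq:int-P-lin} backwards, which undoes the substitution. So the cancellation you hope for does not occur, and the $x$-independent part of your identity is not a relation among the $\tau$'s alone.

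The paper's proof is organized precisely to avoid creating such terms: one multiplies \eqref{eq:int-P-lin} at $n$ by $\tau(n+e_i)$ and uses \eqref{eq:ad-lin-bil} to rewrite each product $\bphi(n+e_k,x)\tau(n+e_i)$, $k\neq i$, as $\bphi(n+e_i,x)\tau(n+e_k)+\mathrm{sgn}(k-i)\,\bphi(n+e_i+e_k,x)\tau(n)$, so that no diagonal second shifts ever arise. Writing the resulting identity at $n-e_i$ (with index $i$) and at $n-e_j$ (with index $j$), subtracting, and invoking \eqref{eq:ad-lin-bil} together with the Hirota equation, every surviving wave-function value collapses --- after one final use of \eqref{eq:int-P-lin} --- onto a multiple of the single function $\bphi(n,x)$, which is divided out; comparison of the powers of $x$ then yields the scalar equation \eqref{eq:nlin-tau-S}, and the substitution $S(n)=(x_{|n|+m+1}-x_{|n|+m})\tilde\tau(n)\tau(n)$ turns it into \eqref{eq:t-tau}. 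Note finally that your fallback (evaluating the determinant identities at $x=x_{N+1}$ and $x=x_{N+2}$) merely reproves Proposition~\ref{prop:Delta-Paszkowski-CH} for the particular interpolation determinants; it does not establish the present statement, which is a compatibility assertion about arbitrary solutions of the linear system.
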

\begin{proof}
Notice first, that the linear system~\eqref{eq:ad-lin-bil} and \eqref{eq:tau-Paszkowski-CH} imply the identity
\begin{equation*} 
(x-x_{|n|+m}) \bphi(n,x) \tau(n) \tau(n+e_i) =
\bphi(n+e_i,x) S(n) +  \sum_{j=1}^m \mathrm{sgn}(j-i) \bphi(n+e_i+e_j,x) \tau(n) \tau(n-e_j), 
\end{equation*}	
where
\begin{equation}
S(n) = \tau(n+e_1) \tau(n-e_1) + \dots + \tau(n+e_m)\tau(n-e_m).
\end{equation} Then by applying the Hirota system and the above identity in two instances for shifts back in the $i$th and $j$th discrete variables we get
\begin{equation*} \begin{split}
(x-x_{|n|+m-1})  \left( \bphi(n-e_i,x) \tau(n-e_j) 
-\bphi(n-e_j)\tau(n-e_i) \right) \tau(n) = & \\ =\bphi(n,x) \tau(n-e_i)\tau(n-e_j) \left( \frac{S(n-e_i)}{\tau^2(n-e_i)} -
\frac{S(n-e_j)}{\tau^2(n-e_j)} \right) + \tau(n-e_i - e_j)
  \sum_{k=1}^m & \bphi(n+e_k) \tau(n-e_k).
\end{split}
\end{equation*}
Using then the linear equations once again we obtain finally
\begin{equation*} \begin{split}
(x-x_{|n|+m-1})  \bphi(n,x) \tau(n-e_i - e_j) & \tau(n) =
\bphi(n,x) \tau(n-e_i)\tau(n-e_j) \left( \frac{S(n-e_i)}{\tau^2(n-e_i)} -
\frac{S(n-e_j)}{\tau^2(n-e_j)} \right) + \\ &+
(x-x_{|n|+m})  \bphi(n,x) \tau(n-e_i - e_j) \tau(n). \end{split}
\end{equation*}
By the standard comparison of the terms in front of different powers of the (spectral) parameter $x$ we get the non-linear equations written exclusively in terms of the $\tau$-function
\begin{equation} \label{eq:nlin-tau-S}
x_{|n|+m}-x_{|n|+m-1} = \frac{ \tau(n-e_i)\tau(n- e_j)}
{\tau(n-e_i - e_j) \tau(n)} \left( \frac{S(n-e_i)}{\tau^2(n-e_i)} -
\frac{S(n-e_j)}{\tau^2(n-e_j)} \right), \qquad i<j.
\end{equation}
In order to transform the equations in the bilinear form let us introduce the function $\tilde{\tau}(n)$ by substitution
\begin{equation}
S(n) = \left( x_{|n|+m+1}-x_{|n|+m} \right)  \tilde{\tau}(n)\tau(n),
\end{equation}
what gives equation~\eqref{eq:tau-Paszkowski-CH}. Then the  
equations \eqref{eq:nlin-tau-S} reduce to \eqref{eq:t-tau}, what concludes the proof.
\end{proof}
\begin{Rem}
The presence of the auxiliary $\tau$-function~\cite{MukaihiraTsujimoto}, denoted here by $\tilde{\tau}$, was essential in construction of the soliton solutions~\cite{KajiwaraMukaihira} of the non-autonomous discrete-time Toda system. In the context of the multiple interpolation problem its presence is quite natural as a direct analog of the determinant $\tilde{\Delta}$.
\end{Rem}

\subsection{Another form of the  non-autonomous multidimensional discrete Toda system}
\label{sec:AB}

Let us distinguish the last direction $e_m$ and define functions $\bpsi(n,x) = \bphi(n,x)/ \tau(n-e_m)$, and then consider
\begin{equation} \label{eq:A-B-def}
A_i(n) = \frac{\tau(n+e_i - e_m) \tau(n-e_i)}{\tau(n) \tau(n-e_m)}, \qquad B_i(n) = 
	\frac{\tau(n+e_i - e_m) \tau(n+e_m)}{\tau(n) \tau(n+e_i)},
\quad i = 1, \dots , m-1.
\end{equation}
	The linear problem \eqref{eq:ad-lin-bil}, \eqref{eq:int-P-lin} in terms of function $\psi(n,x)$ reads as follows
\begin{align} \label{eq:ad-lin-bil-AB}
\bpsi(n+e_i + e_m, x) = & 
\bpsi(n+e_m, x) -  \bpsi(n+e_i, x)\,B_i(n) ,\quad i=1, \dots , m-1,\\ \label{eq:int-P-lin-AB}
(x - x_{|n|+m}) \bpsi(n, x) = & \bpsi(n+e_1, x) \,A_1(n) + \dots \bpsi(n+ e_{m-1}, x) \, A_{m-1}(n) + \bpsi(n+ e_m, x).
\end{align}
By taking compatibility conditions of the above linear system we obtain difference equations for the coefficients/functions $A_i(n)$ and $B_i(n)$. All these equations have their counterparts described in Proposition~\ref{prop:Paszkowski-CH} or follow directly from the definition of the coefficients in terms of the $\tau$-function. We expect however that this form will be important in applications of the system to, for example, theory of multiple orthogonal polynomials.
\begin{Prop}
	Taking only the first part \eqref{eq:ad-lin-bil-AB} of the linear system we can derive the following equations involving the shifts in $i\neq j$ directions
\begin{equation} \label{eq:lin-BiBj}
\bpsi(n+e_i + e_j, x)(B_i(n+e_j) - B_j(n+e_i)) =
\bpsi(n+e_i, x) B_i(n) -  \bpsi(n+e_j, x)\,B_j(n) ,
\end{equation}
where $ i,j=1, \dots , m-1$. The compatibility condition of the linear system reads
\begin{align}
\label{eq:BB}
B_i(n+e_m) B_j(n+e_i) & = B_j(n+e_m) B_i(n+e_j), \\
\label{eq:B+B}
B_i(n) + B_j(n+e_i) & = B_j(n) + B_i(n+e_j).
\end{align} 
\end{Prop}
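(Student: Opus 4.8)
The plan is to obtain \eqref{eq:lin-BiBj} from \eqref{eq:ad-lin-bil-AB} by a ``cross-shift'' of the linear equations, and then to read off \eqref{eq:BB}--\eqref{eq:B+B} as the integrability (compatibility) conditions of the system so obtained. For \eqref{eq:lin-BiBj} I would apply the elementary shift $n\mapsto n+e_j$ to the $i$-th equation of \eqref{eq:ad-lin-bil-AB} and the shift $n\mapsto n+e_i$ to the $j$-th one; both left-hand sides then equal $\bpsi(n+e_i+e_j+e_m,x)$, so on subtracting the two identities the term $\bpsi(n+e_i+e_j,x)$ survives with coefficient $B_i(n+e_j)-B_j(n+e_i)$ and one is left with $\bpsi(n+e_j+e_m,x)-\bpsi(n+e_i+e_m,x)=\bpsi(n+e_i+e_j,x)\,(B_i(n+e_j)-B_j(n+e_i))$. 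Now I would rewrite the left-hand side a second time, this time by the $i$-th and $j$-th equations of \eqref{eq:ad-lin-bil-AB} with the \emph{unshifted} argument $n$: the two copies of $\bpsi(n+e_m,x)$ cancel and what remains is exactly $\bpsi(n+e_i,x)B_i(n)-\bpsi(n+e_j,x)B_j(n)$. Equating the two expressions gives \eqref{eq:lin-BiBj}. (In the non-degenerate case one may instead repeat the remark-style argument used for \eqref{eq:Z-Delta}: both sides are wave functions attached to the same data, hence proportional, with the proportionality factor fixed by one scalar comparison.)

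For the compatibility conditions I see two routes. The first is the ``integrable systems'' one promised by the statement: \eqref{eq:lin-BiBj} has precisely the shape of the Hirota adjoint linear problem \eqref{eq:ad-lin-bil}, so its three-dimensional consistency is the relevant constraint. For a triple $i,j,k\in\{1,\dots,m-1\}$ I would compute $\bpsi(n+e_i+e_j+e_k,x)$ by \eqref{eq:lin-BiBj} in two orders — eliminating first the $e_k$-shift, then the $e_j$-shift (and, as a check, the $e_i$-shift) — each time reducing every second-order shift $\bpsi(n+e_a+e_b,x)$ again by \eqref{eq:lin-BiBj}; matching the coefficients of the independent values $\bpsi(n+e_i,x),\bpsi(n+e_j,x),\bpsi(n+e_k,x)$ then yields, after using the antisymmetry $B_a(n+e_b)-B_b(n+e_a)=-(B_b(n+e_a)-B_a(n+e_b))$ and clearing denominators, the multiplicative identity \eqref{eq:BB} (from the ``pure'' coefficients) and the additive identity \eqref{eq:B+B} (from the ``mixed'' one). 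The second, shorter route is the one alluded to by ``follow directly from the definition of the coefficients'': inserting \eqref{eq:A-B-def} one finds $B_i(n+e_m)B_j(n+e_i)=\tau(n+2e_m)\tau(n+e_i+e_j-e_m)/\big(\tau(n+e_m)\tau(n+e_i+e_j)\big)$, manifestly symmetric in $i\leftrightarrow j$, which is \eqref{eq:BB}; and, applying the Hirota equation \eqref{eq:H-M} with indices $i,j,m$ at base point $n-e_m$, one gets $B_i(n)-B_j(n)=-\tau(n+e_m)\tau(n+e_i+e_j-e_m)/\big(\tau(n+e_i)\tau(n+e_j)\big)$, while the same Hirota relation at base point $n$ gives the identical value for $B_i(n+e_j)-B_j(n+e_i)$; equating the two is \eqref{eq:B+B}.

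The main obstacle is the bookkeeping in the first route — tracking the three coefficients through the double elimination and recognising which antisymmetric and symmetric combinations reproduce \eqref{eq:BB} and \eqref{eq:B+B}, exactly the manipulation that turns the consistency of \eqref{eq:ad-lin-bil} into \eqref{eq:H-M}. There is also a small conceptual point: separating coefficients of $\bpsi(n+e_i,x),\bpsi(n+e_j,x),\bpsi(n+e_k,x)$ presupposes their linear independence over $\Bbbk$, which holds in the perfect case by the leading-term formula \eqref{eq:leading-term-Z}; the $\tau$-function route sidesteps this entirely, so I would present it as the main argument and keep the compatibility computation as the structural explanation. In any case all of \eqref{eq:lin-BiBj}, \eqref{eq:BB} and \eqref{eq:B+B} are also among the relations encoded by Proposition~\ref{prop:Paszkowski-CH}, which furnishes an independent check.
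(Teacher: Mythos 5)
Your derivation of \eqref{eq:lin-BiBj} is exactly the paper's: cross-shift the two members of \eqref{eq:ad-lin-bil-AB} to cancel $\bpsi(n+e_i+e_j+e_m,x)$, then rewrite $\bpsi(n+e_j+e_m,x)-\bpsi(n+e_i+e_m,x)$ using the unshifted equations. For \eqref{eq:BB}--\eqref{eq:B+B}, the argument you designate as your main one (the $\tau$-function computation) is correct and complete: by \eqref{eq:A-B-def} the product $B_i(n+e_m)B_j(n+e_i)=\tau(n+2e_m)\tau(n+e_i+e_j-e_m)/\bigl(\tau(n+e_m)\tau(n+e_i+e_j)\bigr)$ is manifestly symmetric in $i\leftrightarrow j$, and the Hirota equation \eqref{eq:H-M} with indices $i<j<m$ taken at the base points $n-e_m$ and $n$ gives $B_i(n)-B_j(n)=B_i(n+e_j)-B_j(n+e_i)$. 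This is, however, a genuinely different route from the paper's, which stays entirely on the level of the linear system: from the same pair of $j$- and $i$-shifted equations it eliminates $\bpsi(n+e_i+e_j,x)$ instead of $\bpsi(n+e_i+e_j+e_m,x)$, shifts back in the $m$-direction, and so obtains a second relation of the same shape as \eqref{eq:lin-BiBj} but with coefficients $B_i(n+e_j-e_m)$, $B_j(n+e_i-e_m)$; proportionality of the two coefficient vectors then yields both \eqref{eq:BB} and \eqref{eq:B+B}. The paper's route establishes the identities as honest compatibility conditions of \eqref{eq:ad-lin-bil-AB} without invoking the Hirota equation for $\tau$; yours is shorter but presupposes the parametrization \eqref{eq:A-B-def} and the validity of \eqref{eq:H-M}.

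Your first, ``structural'' route does not work as described. Consistency of \eqref{eq:lin-BiBj} around a cube spanned by a triple $i,j,k\in\{1,\dots,m-1\}$ can only produce relations among the $B$'s shifted in the directions $e_i,e_j,e_k$, whereas \eqref{eq:BB} involves the shift $e_m$; moreover no such triple exists for $m=3$, where \eqref{eq:BB}--\eqref{eq:B+B} are already non-trivial. The third direction needed for a consistency argument is $e_m$ itself, and supplying it is precisely what the paper's elimination-plus-back-shift step accomplishes. I would drop that route (or replace it by the paper's two-fold derivation of \eqref{eq:lin-BiBj}) and keep the $\tau$-function computation as the actual proof.
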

\begin{proof}
Let us subtract from equation \eqref{eq:ad-lin-bil-AB} shifted in $j$th direction its version with the indices $i$ and $j$ reversed, what eliminates $\bpsi(n+e_i+e_j+e_m)$ and gives \eqref{eq:lin-BiBj}. Eliminating from the same equations $\bpsi(n+e_i+e_j)$ and shifting back in $m$-direction we obtain
\begin{equation*} \label{eq:lin-BiBj-m}
\bpsi(n+e_i + e_j, x)(B_i(n+e_j-e_m) - B_j(n+e_i-e_m)) =
\bpsi(n+e_i, x) B_i(n+e_j-e_m) -  \bpsi(n+e_j, x)\,B_j(n+e_i - e_m) ,
\end{equation*}
what compared with \eqref{eq:lin-BiBj} gives \eqref{eq:BB} and \eqref{eq:B+B}.
\end{proof}

The above equations are just another form of the discrete KP system. In what follows we discuss the impact of the constraint \eqref{eq:int-P-lin-AB}. We skip the detailed derivation of the corresponding equations because it follows closely the reasoning of the previous Section.
\begin{Prop}
Compatibility of the linear system~\eqref{eq:ad-lin-bil-AB} with the constraint~\eqref{eq:int-P-lin-AB} implies the equations
\begin{equation} 
A_i(n+e_m) B_i(n) = B_i(n-e_i) A_i(n), \quad i = 1,\dots , m-1,
\end{equation}
and
\begin{equation} \label{eq:P-A-B}
x_{|n|+m} - x_{|n|+m-1}= B_i(n-e_i) \Sigma(n-e_i) - \Sigma(n-e_m) B_i(n-e_i-e_m), \qquad i=1, \dots , m-1,
\end{equation}
where
\begin{equation}
\Sigma(n) = \frac{A_1(n)}{B_1(n)} + \dots +\frac{A_{m-1}(n)}{B_{m-1}(n)} + 1.
\end{equation}
\end{Prop}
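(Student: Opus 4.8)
The plan is to deduce these relations from the $\tau$-function equations already obtained in this section, rather than to rerun the compatibility computation from scratch. First I would record that, by the very definitions $\bpsi(n,x)=\bphi(n,x)/\tau(n-e_m)$ and \eqref{eq:A-B-def}, the system \eqref{eq:ad-lin-bil-AB}--\eqref{eq:int-P-lin-AB} is nothing but the gauge transform of \eqref{eq:ad-lin-bil}--\eqref{eq:int-P-lin}: dividing the $(i,m)$ instance of \eqref{eq:ad-lin-bil} by $\tau(n)\tau(n+e_i)$ gives \eqref{eq:ad-lin-bil-AB}, while dividing \eqref{eq:int-P-lin} by $\tau(n)\tau(n-e_m)$ gives \eqref{eq:int-P-lin-AB}. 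Hence (in the non-degenerate case, so that the gauge is well defined) the compatibility of \eqref{eq:ad-lin-bil-AB}--\eqref{eq:int-P-lin-AB} is equivalent to that of \eqref{eq:ad-lin-bil}--\eqref{eq:int-P-lin}, which produces the Hirota equations \eqref{eq:H-M} together with \eqref{eq:tau-Paszkowski-CH} and \eqref{eq:t-tau}; in particular the scalar relation \eqref{eq:nlin-tau-S} is available. It then remains only to translate these into the variables $A_i,B_i$.

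For the first equation I would simply substitute \eqref{eq:A-B-def} and simplify: after the obvious cancellations both $A_i(n+e_m)B_i(n)$ and $B_i(n-e_i)A_i(n)$ collapse to $\tau(n+e_i-e_m)\tau(n-e_i+e_m)/\tau(n)^2$, so this relation is in fact a purely algebraic identity among the shifts of $\tau$ occurring in \eqref{eq:A-B-def} and uses no equation of motion at all (it is the $A$--$B$ counterpart of closing the linear problem up in the $(e_i,e_m)$ plane).

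For \eqref{eq:P-A-B} the crucial observation is that $A_i(n)/B_i(n)=\tau(n-e_i)\tau(n+e_i)/\bigl(\tau(n-e_m)\tau(n+e_m)\bigr)$, so that
\[
\Sigma(n)=\frac{\tau(n-e_1)\tau(n+e_1)+\dots+\tau(n-e_m)\tau(n+e_m)}{\tau(n-e_m)\tau(n+e_m)}=\frac{S(n)}{\tau(n-e_m)\tau(n+e_m)},
\]
where $S(n)$ is the combination appearing in \eqref{eq:nlin-tau-S}. Inserting this together with \eqref{eq:A-B-def} into the right-hand side of \eqref{eq:P-A-B} and cancelling common factors, the expression $B_i(n-e_i)\Sigma(n-e_i)-\Sigma(n-e_m)B_i(n-e_i-e_m)$ reduces to
\[
\frac{\tau(n-e_i)\tau(n-e_m)}{\tau(n-e_i-e_m)\tau(n)}\left(\frac{S(n-e_i)}{\tau(n-e_i)^2}-\frac{S(n-e_m)}{\tau(n-e_m)^2}\right),
\]
which is exactly the right-hand side of \eqref{eq:nlin-tau-S} for the pair of indices $i<m$, hence equals $x_{|n|+m}-x_{|n|+m-1}$. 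The relation $A_i(n+e_m)B_i(n)=B_i(n-e_i)A_i(n)$ proved just above is what makes the two ways of shifting $\Sigma$ inside $B_i$ consistent, so no further input is needed.

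The only genuinely delicate point I anticipate is the bookkeeping of the discrete shifts, and the substitution $\Sigma=S/(\tau(\cdot-e_m)\tau(\cdot+e_m))$ is exactly what turns it into a routine cancellation; I would expect that to be the step requiring care. If one prefers an argument that never leaves the $\bpsi$-level, it suffices to repeat verbatim the elimination of the previous subsection: shift \eqref{eq:int-P-lin-AB}, use \eqref{eq:ad-lin-bil-AB} and its two-index consequence \eqref{eq:lin-BiBj} to remove the $(e_i+e_m)$ and $(e_i+e_j)$ shifts of $\bpsi$, and compare the coefficients of the powers of the spectral parameter $x$; this is the derivation suppressed in the text, and its only real obstacle is, once more, controlling the shifted arguments combinatorially.
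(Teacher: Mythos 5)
Your proof is correct and follows essentially the route the paper intends: the author skips the detailed derivation, but the surrounding text and the Remark after the Proposition state precisely your two key points --- that these equations follow directly from the definition \eqref{eq:A-B-def} of the coefficients in terms of the $\tau$-function, and that $\Sigma(n)=S(n)/\bigl(\tau(n-e_m)\tau(n+e_m)\bigr)$ turns \eqref{eq:P-A-B} into \eqref{eq:nlin-tau-S} for the index pair $i<j=m$. Your algebra checks out (both sides of the first identity collapse to $\tau(n+e_i-e_m)\tau(n-e_i+e_m)/\tau(n)^2$, and the right-hand side of \eqref{eq:P-A-B} reduces exactly to the right-hand side of \eqref{eq:nlin-tau-S}), so no gap remains.
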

\begin{Rem}
One can check that in terms of the notation of the previous Section
\begin{equation}
\Sigma(n) = \frac{S(n)}{\tau(n-e_m) \tau(n+e_m)},
\end{equation}
and equation \eqref{eq:P-A-B} is then equivalent to \eqref{eq:nlin-tau-S} for the index $j=m$.
\end{Rem}
\begin{Rem}
We presented the basic set within the system of integrable nonlinear partial difference equations derived from compatibility of the linear system \eqref{eq:ad-lin-bil-AB} and \eqref{eq:int-P-lin-AB}. Other equations, like for example the version of equation 
\eqref{eq:nlin-tau-S} for $i$ and $j$ different from the distinguished variable $m$, reads in the present notation 
\begin{equation*}
x_{|n|+m} - x_{|n|+m-1}= 
\frac{B_i(n-e_i) B_j(n-e_i-e_j) \Sigma(n-e_i) - 
B_j(n-e_j)  B_i(n-e_i-e_j) \Sigma(n-e_j)}
{B_j(n-e_j) -B_i(n-e_i) } ,
\end{equation*}
and can be derived from other equations of this Section.
\end{Rem}

\section{The two-dimensional subcase} \label{sec:2D}
Below we discuss in more detail the simplest case $m=2$ and we make connection to previous works. To elucidate the connection we change the notation to more explicit one, used in this context, putting the discrete variables as subscripts. For example the linear problem for the non-autonomous discrete-time Toda lattice equation reads
\begin{align}
\label{eq:ad-lin-bil-m2-ex}
\bphi_{n_1+1,n_2+1}(x) \tau_{n_1,n_2} & =	
\bphi_{n_1,n_2+1}(x) \tau_{n_1+1,n_2}
- \bphi_{n_1+1,n_2}(x) \tau_{n_1,n_2+1} , \\
\label{eq:P-lin-m2-ex}
(x-x_{n_1 + n_2 + 2})\bphi_{n_1,n_2} (x) \tau_{n_1,n_2}  & = \bphi_{n_1+1,n_2}(x) \tau_{n_1-1,n_2} +  \bphi_{n_1,n_2+1}(x) \tau_{n_1,n_2-1},	
\end{align}
while the non-linear equations are of the form
\begin{equation} \label{eq:nlin-tau-S-2D-ex}
x_{n_1 + n_2 + 2}-x_{n_1 + n_2 + 1} = \frac{ \tau_{n_1-1,n_2}\tau_{n_1,n_2-1}}
{\tau_{n_1-1,n_2-1} \tau_{n_1,n_2}} \left( \frac{S_{n_1-1,n_2}}{\tau^2_{n_1-1,n_2}} -
\frac{S_{n_1,n_2-1}}{\tau^2_{n_1,n_2-1}} \right), 
\end{equation}
where
\begin{equation} \label{eq:S-2D}
S_{n_1,n_2} = \tau_{n_1+1,n_2} \tau_{n_1-1,n_2} +  \tau_{n_1,n_2+1}\tau_{n_1,n_2-1}.
\end{equation}

\subsection{Non-autonomous discrete-time Toda equation}
In the  case $m=2$ the linear system \eqref{eq:ad-lin-bil-AB}-\eqref{eq:int-P-lin-AB} takes the following form 
\begin{align}
\bpsi_{n_1 +1, n_2+1}(x) = & \bpsi_{n_1,n_2+1}(x) - \bpsi_{n_1+1,n_2}(x) B_{n_1,n_2},\\
(x-x_{n_1+n_2 +2}) \bpsi_{n_1.n_2}(x) = & \bpsi_{n_1+1,n_2}(x) A_{n_1,n_2} + \bpsi_{n_1,n_2+1}(x).
\end{align}
The first part \eqref{eq:BB}-\eqref{eq:B+B} of the compatibility condition is empty and the non-linear equations reduce to
\begin{align}
A_{n_1,n_2} B_{n_1 - 1,n_2} & = A_{n_1,n_2+1}B_{n_1,n_2} \\
x_{n_1+n_2 + 2} - x_{n_1+n_2+1} & = A_{n_1-1,n_2} + B_{n_1-1,n_2} - A_{n_1,n_1} - B_{n_1-1,n_2-1}.
\end{align}

The following change of variables
\begin{equation*}
t = -(n_1+n_2), \qquad k = n_1, \qquad C^t_k = B_{n_1,n_2}, \qquad D^{t+1}_k = - A_{n_1,n_2}, \qquad \lambda_{t+1} = x_{n_1+n_2},
\end{equation*}
gives the non-autonomous discrete-time Toda lattice equations of the form
\begin{align}
C^{t+1}_{k-1} D^{t+1}_k & = C^t_k D^t_k , \\
\lambda_{t+1} + C^{t+1}_{k} + D^{t+1}_{k} & =
\lambda_{t} + C^{t}_{k} + D^{t}_{k+1},
\end{align}
studied in~\cite{SpiridonovZhedanov,Hirota-naTl,KajiwaraMukaihira, MukaihiraTsujimoto}.

\subsection{The rational interpolation of a function}

When $R_{n_1,n_2}(x) = P_{n_1,n_2}(x)/Q_{n_1,n_2}(x)$ is the rational interpolation of a function $f(x)$ with nodes in points $x_s$, where $s=1, \dots N$, and $N=n_1 + n_2 + 1$, then the interpolation condition can be brought to the form
\begin{equation}
- P_{n_1,n_2}(x_s) + Q_{n_1,n_1}(x_s) f(x_s) = 0, \quad s=1, \dots N. 
\end{equation}
The determinant representation of the nominator reads 
\begin{equation}
P_{n_1,n_2}(x) = \left| \begin{matrix}
-1 & -x_1 & \dots & -x_1^{n_1} & f(x_1) & x_1 f(x_1) & \dots & x_1^{n_2} f(x_1) \\
\vdots & \vdots & & \vdots & \vdots & \vdots & & \vdots \\
-1 & -x_N & \dots & -x_N^{n_1} & f(x_N) & x_N f(x_N) & \dots & x_N^{n_2} f(x_N) \\
1 & x & \dots & x^{n_1} & 0 & 0 & \dots & 0
\end{matrix} \right|
\end{equation}
and can be calculated using the generalized Laplace expansion with respect to the first $n_1 + 1$ columns. The corresponding sum consists of expressions which up to a sign have the form
\begin{equation*}
\begin{split}
\left| \begin{matrix}
1 & x & \dots & x^{n_1} \\
1 & x_{i_1} & \dots & x_{i_1}^{n_1} \\
\vdots & \vdots & & \vdots \\
1 & x_{i_{n_1}}& \dots & x_{i_{n_1}}^{n_1} 
\end{matrix} \right| 
\left| \begin{matrix}
f(x_{j_1}) & 0 & \dots & 0\\
0 & f(x_{j_2}) & \dots & 0 \\
\vdots & \vdots & \ddots & \vdots \\
0 & 0 & \dots & f(x_{j_{n_2+1}})
\end{matrix} \right| 
\left| \begin{matrix}
1 &x_{j_1}  & \dots & x_{j_1}^{n_2} \\
1 &x_{j_2}  & \dots & x_{j_2}^{n_2}  \\
\vdots & \vdots & & \vdots \\
1 &x_{j_{n_2+1}} & \dots & x_{j_{n_2+1}}^{n_2}
\end{matrix} \right|
 = \\=
\prod_{1\leq k \leq n_1} (x-x_{i_k})
\prod_{1 \leq k<\ell \leq n_1} (x_{i_k}-x_{i_\ell}) 
\prod_{1\leq p \leq n_2+1} f(x_{j_p})
\prod_{1 \leq p < q \leq n_2+1} (x_{i_p}-x_{i_q}) ,\qquad 
\end{split}
\end{equation*}
where
\begin{gather*}
1 \leq i_1 < i_2 < \dots < i_{n_1} \leq N, \quad
1 \leq j_1 < j_2 < \dots < j_{n_2+1} \leq N, \\
\{  i_1, i_2,  \dots , i_{n_1} \} \cap \{  j_1, j_2, \dots , j_{n_2+1} \} = \emptyset .
\end{gather*}
Putting them together we obtain
\begin{equation}
P_{n_1,n_2}(x) = (-1)^{\sigma(n_1)} \sum_{\begin{smallmatrix} I \subset \{ 1, \dots ,N\}\\ |I| = n_1 \end{smallmatrix}} (-1)^{S(I)} \prod_{i\in I} (x-x_i) \prod_{\begin{smallmatrix} i<j \\i,j\in I
	\end{smallmatrix}}(x_i - x_j)
\prod_{s\in \bar I} f(x_s) \prod_{\begin{smallmatrix} s<t \\ s,t\in \bar{I}\end{smallmatrix}} (x_s - x_t),
\end{equation}
where
\begin{equation*}
\sigma(n_1) = \frac{(n_1 + 1)(n_1 + 2)}{2},  \qquad S(I) =  i_1 + i_2 +  \dots + i_{n_1}, \quad \text{for} \quad I = \{  i_1, i_2,  \dots , i_{n_1} \},
\end{equation*}
and $ \bar{I}$ is the complement of $I$ in $\{ 1,2,\dots , N\}$. Similarly, by the generalized Laplace expansion of the determinant representing the denominator
\begin{equation}
Q_{n_1,n_2}(x) = \left| \begin{matrix}
-1 & -x_1 & \dots & -x_1^{n_1} & f(x_1) & x_1 f(x_1) & \dots & x_1^{n_2} f(x_1) \\
\vdots & \vdots & & \vdots & \vdots & \vdots & & \vdots \\
-1 & -x_N & \dots & -x_N^{n_1} & f(x_N) & x_N f(x_N) & \dots & x_N^{n_2} f(x_N) \\
0 & 0 & \dots & 0  & 1 & x & \dots & x^{n_2}
\end{matrix} \right|
\end{equation}
we obtain
\begin{equation}
Q_{n_1,n_2}(x) = (-1)^{\sigma(n_1)+N} \sum_{\begin{smallmatrix} J \subset \{ 1, \dots ,N\}\\ |J| = n_1+1 \end{smallmatrix}} (-1)^{S(J)} \prod_{\begin{smallmatrix} i<j \\i,j\in J
	\end{smallmatrix}}(x_i - x_j)
\prod_{s\in \bar{J}}  (x-x_s) f(x_s) \prod_{\begin{smallmatrix} s<t \\ s,t\in \bar{J}\end{smallmatrix}} (x_s - x_t),
\end{equation}
what concludes derivation of a division free version of the classical rational interpolation formulas by Cauchy~\cite{Cauchy}.

 Finally let us present the corresponding expression for the basic determinants which provide solutions to the non-autonomous discrete-time Toda lattice equations in the form~\eqref{eq:nlin-tau-S-2D-ex}-\eqref{eq:S-2D}
\begin{equation} \begin{split}
\Delta_{n_1,n_2} = & \left| \begin{matrix}
-1 & -x_1 & \dots & -x_1^{n_1} & f(x_1) & x_1 f(x_1) & \dots & x_1^{n_2} f(x_1) \\
\vdots & \vdots & & \vdots & \vdots & \vdots & & \vdots \\
-1 & -x_{N+1} & \dots & -x_{N+1}^{n_1} & f(x_{N+1}) & x_{N+1} f(x_{N+1}) & \dots & x_{N+1}^{n_2} f(x_{N+1}) 
\end{matrix} \right| = \\ 
= & (-1)^{\sigma(n_1)+n_1 + 1} \sum_{\begin{smallmatrix} K \subset \{ 1, \dots ,N +1 \}\\ |K| = n_1+1 \end{smallmatrix}} (-1)^{S(K)} \prod_{\begin{smallmatrix} i<j \\i,j\in K
	\end{smallmatrix}}(x_i - x_j)
\prod_{s\in \bar{K}}  f(x_s) \prod_{\begin{smallmatrix} s<t \\ s,t\in \bar{K}\end{smallmatrix}} (x_s - x_t).
\end{split}
\end{equation}

\subsection{Non-autonomous Wynn recurrence}
In the limiting case of the Pad\'{e} approximation the ratio of two solutions of the linear problem satisfies the so called Wynn recurrence~\cite{Wynn,BakerGraves-Morris}; see also~\cite{Doliwa-Siemaszko-W}. The rational interpolation analog of the recurrence is known in the literature~\cite{Claessens,CuytWuytack}. Let us present its derivation within the present formalism. The result is valid in a broader context of integrable systems theory and is not restricted to the interpolation applications only.
\begin{Prop}
	Given two non-trivial solutions $P_{n_1,n_2}(x)$ and $Q_{n_1,n_2}(x)$ of the two-dimensional version~\eqref{eq:ad-lin-bil-m2-ex}-\eqref{eq:P-lin-m2-ex} of the linear problem
then their ratio $R_{n_1,n_2}(x) = P_{n_1,n_2}(x)/Q_{n_1,n_2}(x)$ satisfies the following non-autonomous generalization of Wynn's cross-rule
\begin{equation*}
\frac{x-x_{n_1 + n_2 +2}}{R_{n_1-1,n_2}(x) - R_{n_1,n_2}(x)} +
\frac{x-x_{n_1 + n_2 +1}}{R_{n_1+1,n_2}(x) - R_{n_1,n_2}(x)} =
\frac{x-x_{n_1 + n_2 +2}}{R_{n_1,n_2-1}(x) - R_{n_1,n_2}(x)} +
\frac{x-x_{n_1 + n_2 +1}}{R_{n_1,n_2+1}(x) - R_{n_1,n_2}(x)}.
\end{equation*}
\end{Prop}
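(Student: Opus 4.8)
The plan is to clear every denominator in the cross-rule and reduce it to a short chain of bilinear identities coming from the linear problem \eqref{eq:ad-lin-bil-m2-ex}--\eqref{eq:P-lin-m2-ex}, exactly in the spirit of the computation that produced \eqref{eq:nlin-tau-S}. Throughout, write $n=(n_1,n_2)$, $|n|=n_1+n_2$, set $a=x-x_{|n|+2}$ and $b=x-x_{|n|+1}$, and recall that $P$ and $Q$ solve \eqref{eq:ad-lin-bil-m2-ex}--\eqref{eq:P-lin-m2-ex} with one and the same $\tau$. The objects I would work with are the Wronskian-type bilinears $F^{(i)}_n=P_{n+e_i}Q_n-Q_{n+e_i}P_n$, $i=1,2$, and the single diagonal bilinear $G_n=P_{n+e_1}Q_{n+e_2}-Q_{n+e_1}P_{n+e_2}$.

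First, using only the definition $R=P/Q$, I would rewrite the four nearest-neighbour differences as, e.g., $R_{n+e_1}-R_n=F^{(1)}_n/(Q_nQ_{n+e_1})$ and $R_{n-e_1}-R_n=-F^{(1)}_{n-e_1}/(Q_nQ_{n-e_1})$, and similarly in the $e_2$ direction. Substituting this into the cross-rule and cancelling one common factor $Q_n$ turns the claim into the identity
\[
-\frac{a\,Q_{n-e_1}}{F^{(1)}_{n-e_1}}+\frac{b\,Q_{n+e_1}}{F^{(1)}_n}
=-\frac{a\,Q_{n-e_2}}{F^{(2)}_{n-e_2}}+\frac{b\,Q_{n+e_2}}{F^{(2)}_n}.
\]

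Next, I would cross-multiply the structure equations. Multiplying the constraint \eqref{eq:P-lin-m2-ex} written for $P$ by $Q_{n+e_1}$, the one written for $Q$ by $P_{n+e_1}$, and subtracting (and symmetrically under $e_1\leftrightarrow e_2$) gives $a\,F^{(1)}_n\tau_n=G_n\tau_{n-e_2}$ and $a\,F^{(2)}_n\tau_n=-G_n\tau_{n-e_1}$; performing the same operation on \eqref{eq:ad-lin-bil-m2-ex} gives $F^{(2)}_{n+e_1}\tau_n=-G_n\tau_{n+e_1}$ and $F^{(1)}_{n+e_2}\tau_n=-G_n\tau_{n+e_2}$. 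Re-indexing the latter pair at $n-e_1$ and $n-e_2$ yields $G_{n-e_1}=-F^{(2)}_n\tau_{n-e_1}/\tau_n$ and $G_{n-e_2}=-F^{(1)}_n\tau_{n-e_2}/\tau_n$; shifting the first pair to $n-e_i$ picks up the weight $x-x_{|n|+1}=b$ in place of $a$. Combining these, every one of the four $F$-factors in the displayed identity becomes a rational expression in $G_n$, $a$, $b$ and shifted values of $\tau$.

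Finally, I would substitute those expressions, extract the common factor $ab\,\tau_n/G_n$ (permissible because the asserted relation is one between rational functions of $x$, so the special values $x=x_{|n|+1},x_{|n|+2}$ need no separate argument), and verify that the residual bracket vanishes. Concretely it equals
\[
\frac{1}{\tau_{n-e_1}\tau_{n-e_2}}\Bigl[\,Q_{n+e_1}\tau_{n-e_1}+Q_{n+e_2}\tau_{n-e_2}
+\frac{a\,\tau_n}{\tau_{n-e_1-e_2}}\bigl(Q_{n-e_2}\tau_{n-e_1}-Q_{n-e_1}\tau_{n-e_2}\bigr)\Bigr],
\]
which is zero: $Q_{n+e_1}\tau_{n-e_1}+Q_{n+e_2}\tau_{n-e_2}=(x-x_{|n|+2})Q_n\tau_n=a\,Q_n\tau_n$ by \eqref{eq:P-lin-m2-ex} applied to $Q$, and $Q_{n-e_2}\tau_{n-e_1}-Q_{n-e_1}\tau_{n-e_2}=-Q_n\tau_{n-e_1-e_2}$ by \eqref{eq:ad-lin-bil-m2-ex} applied to $Q$ at $n-e_1-e_2$, so the two surviving terms cancel. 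The only genuine difficulty here is bookkeeping: the signs and index shifts proliferate rapidly under the cross-multiplications, and the trick that keeps everything under control is funnelling all the Wronskian-type bilinears through the single $G_n$, after which the whole cross-rule collapses onto the two equations \eqref{eq:ad-lin-bil-m2-ex} and \eqref{eq:P-lin-m2-ex} evaluated on $Q$ alone.
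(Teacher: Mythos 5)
Your proof is correct: I verified each of the cross-multiplied bilinear identities ($a F^{(1)}_n\tau_n=G_n\tau_{n-e_2}$, $a F^{(2)}_n\tau_n=-G_n\tau_{n-e_1}$, $F^{(2)}_{n+e_1}\tau_n=-G_n\tau_{n+e_1}$, $F^{(1)}_{n+e_2}\tau_n=-G_n\tau_{n+e_2}$), the reduction of the cross-rule to the displayed identity in the $F^{(i)}$ and $Q$, and the final cancellation of the bracket via \eqref{eq:P-lin-m2-ex} and \eqref{eq:ad-lin-bil-m2-ex} applied to $Q$ alone. The underlying mechanism is the same as the paper's --- all four nearest-neighbour Casoratians are tied, through ratios of $\tau$ and the weights $a$, $b$, to a single diagonal bilinear --- but the organization is genuinely different. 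The paper never clears denominators globally: it computes $1/(R_{n_1-1,n_2}-R_{n_1,n_2})+1/(R_{n_1,n_2}-R_{n_1,n_2-1})$ and $1/(R_{n_1,n_2+1}-R_{n_1,n_2})+1/(R_{n_1,n_2}-R_{n_1+1,n_2})$, shows that both equal the same expression over the pivot $R_{n_1-1,n_2+1}-R_{n_1,n_2}$ up to the respective factors $x-x_{n_1+n_2+1}$ and $x-x_{n_1+n_2+2}$, and then eliminates that pivot; your $G_{n-e_1}$ is precisely the numerator bilinear of that diagonal difference, so the two proofs are dual presentations of one computation. Your version makes the proportionality structure fully explicit and collapses the rule onto the two linear equations evaluated on $Q$, at the cost of heavier index bookkeeping; the paper's version is shorter but leaves the proportionalities implicit in the intermediate three-term relations. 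Both arguments carry the same implicit non-degeneracy assumptions ($Q$, $\tau$, $G_n$ and the $F^{(i)}$ nonvanishing), and your remark that an identity between rational functions of $x$ need only be checked generically is a legitimate way to dispose of the special values $x=x_{|n|+1}$, $x=x_{|n|+2}$.
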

\begin{proof}
	Summing up the inverse of the direct consequence of \eqref{eq:ad-lin-bil-m2-ex}
\begin{equation*}
R_{n_1-1,n_2}(x) - R_{n_1,n_2}(x) = \frac{\tau_{n_1-1,n_2-1}[P_{n_1-1,n_2+1}(x) Q_{n_1,n_2}(x) - Q_{n_1-1,n_2+1}(x) P_{n_1,n_2}(x)]}{Q_{n_1,n_2}(x)[
Q_{n_1,n_2}(x)\tau_{n_1-2,n_2} + Q_{n_1-1,n_2+1}(x) \tau_{n_1-1,n_2-1}]}
\end{equation*}
with the inverse of the following consequence of both equations \eqref{eq:ad-lin-bil-m2-ex} and \eqref{eq:P-lin-m2-ex}
\begin{equation*}
\begin{split}
&R_{n_1,n_2}(x) - R_{n_1,n_2-1}(x) = \\&
\frac{\tau_{n_1-1,n_2-1}
	\left[P_{n_1-1,n_2+1}(x) Q_{n_1,n_2}(x) - 
	Q_{n_1-1,n_2+1}(x) P_{n_1,n_2}(x)\right]}
{Q_{n_1,n_2}(x)\left[
	Q_{n_1,n_2}(x)\left( (x-x_{n_1+n_2+1})\tau_{n_1-1,n_2}\tau_{n_1-1,n_2-1}/
	\tau_{n_1,n_2-1}  -\tau_{n_1-2,n_2} \right) - Q_{n_1-1,n_2+1}(x)  \tau_{n_1-1,n_2-1}\right]}
\end{split}
\end{equation*}
we obtain
\begin{equation*}
\frac{1}{R_{n_1-1,n_2}(x) - R_{n_1,n_2}(x)} + \frac{1}{R_{n_1,n_2}(x) - R_{n_1,n_2-1}(x) } = 
\frac{ (x-x_{n_1+n_2+1})\frac{Q_{n_1,n_2}(x)}{Q_{n_1-1,n_2+1}(x)} \frac{\tau_{n_1-1,n_2}}{\tau_{n_1,n_2-1}}}
	{R_{n_1-1,n_2+1}(x) - R_{n_1,n_2}(x) }.
\end{equation*}
Analogous calculation of the expression
\begin{equation*}
\frac{1}{R_{n_1,n_2+1}(x) - R_{n_1,n_2}(x)} + \frac{1}{R_{n_1,n_2}(x) - R_{n_1+1,n_2}(x) } = 
\frac{ (x-x_{n_1+n_2+2})\frac{Q_{n_1,n_2}(x)}{Q_{n_1-1,n_2+1}(x)} \frac{\tau_{n_1-1,n_2}}{\tau_{n_1,n_2-1}}}
{R_{n_1-1,n_2+1}(x) - R_{n_1,n_2}(x) }
\end{equation*}
concludes the proof.
\end{proof}
 \begin{Rem}
 	We leave to the interested Reader derivation of the non-autonomous Wynn recurrence using the formalism of Section~\ref{sec:AB}; see also~\cite{Claessens}.
 \end{Rem}
\begin{Rem}
	The non-autonomous Wynn recurrence is the simplest case of cross-equations, whose multidimensional consistency on face centered cubic lattice was studied in detail in~\cite{Kels}. 
\end{Rem}

\section{Conclusion and open problems}

We presented new non-autonomous multidimensional (in the sense of the number of variables involved) integrable system which in two dimensions reduces to the non-autonomous discrete-time Toda lattice equations. In deriving the system we were guided by the interpolation generalization of the Hermite--Pad\'{e} approximation problem. We presented also the corresponding class of solutions of the system in terms of determinants built from the interpolation data. 

It would be instructive to apply to the system standard techniques of the integrable systems theory, like the direct methods of construction of the soliton solutions~\cite{Hirota-book,KajiwaraMukaihira}, the inverse spectral transform to find more general solutions~\cite{AblowitzSegur,Konopelchenko-book}, the algebro-geometric techniques to generate finite-gap quasi-periodic solutions~\cite{BBEIM}, the Darboux transformation~\cite{Matveev}. 
However more challenging problems in the context of the system, as we can see, are:
%\begin{itemize}
%	\item 
	clarification of its connection with the theory of discrete multiple orthogonal polynomials~\cite{DOP,MDOP,NikiforovSuslovUvarov}, 
	a possible extension of the theory to the multivariate case as suggested by the recent work~\cite{AriznabarretaManas} on the multivariate Toda hierarchy,
%	\item 
construction of its non-commutative and quantum generalizations~\cite{Doliwa-NCHP},
%	\item 
description of its symmetry structure in terms of root lattices and Weyl groups~\cite{Noumi-book,Dol-AN},
%	\item 
	systematic symmetry reductions to lower dimensional equations~\cite{NagaoYamada},
%	\item 
construction of its version with self-consistent sources~\cite{DoliwaLin},
%	\item 
	finding its geometric interpretation in terms of lattice submanifolds~\cite{Dol-Des,Doliwa-Siemaszko-HP}.
%\end{itemize} 
Any of them deserves a~separate detailed study. 

%\subsection*{Statements and Declarations}
%The author has no competing interests to declare that are %relevant to the content of this article.

\providecommand{\bysame}{\leavevmode\hbox to3em{\hrulefill}\thinspace}

\end{document}